\DeclareMathAlphabet{\mathbbold}{U}{bbold}{m}{n}
\renewcommand{\backref}[1]{}
\renewcommand{\backrefalt}[4]{%
\ifcase #1 %
\or
[p.\ #2]%
\else
[pp.\ #2]%
\fi}
\newcommand*\rel@kern[1]{\kern#1\dimexpr\macc@kerna}
\newcommand*\widebar[1]{%
  \begingroup
  \def\mathaccent##1##2{%
    \rel@kern{0.8}%
    \overline{\rel@kern{-0.8}\macc@nucleus\rel@kern{0.2}}%
    \rel@kern{-0.2}%
  }%
  \macc@depth\@ne
  \let\math@bgroup\@empty \let\math@egroup\macc@set@skewchar
  \mathsurround\z@ \frozen@everymath{\mathgroup\macc@group\relax}%
  \macc@set@skewchar\relax
  \let\mathaccentV\macc@nested@a
  \macc@nested@a\relax111{#1}%
  \endgroup
}
\newcommand{\para}{%
  \@startsection{paragraph}{4}%
  {\z@}{2ex \@plus 3.3ex \@minus .2ex}{-1em}%
  {\normalfont\normalsize\bfseries}%
}
\newtheorem{theorem}{Theorem}
\newtheorem{lemma}[theorem]{Lemma}
\newtheorem{proposition}[theorem]{Proposition}
\newtheorem{corollary}[theorem]{Corollary}
\newtheorem{definition}[theorem]{Definition}
\theoremstyle{definition}
\newcommand{\eq}[1]{\hyperref[eq:#1]{(\ref*{eq:#1})}}
\renewcommand{\sec}[1]{\hyperref[sec:#1]{Section~\ref*{sec:#1}}}
\newcommand{\thm}[1]{\hyperref[thm:#1]{Theorem~\ref*{thm:#1}}}
\newcommand{\lem}[1]{\hyperref[lem:#1]{Lemma~\ref*{lem:#1}}}
\newcommand{\defn}[1]{\hyperref[def:#1]{Definition~\ref*{def:#1}}}
\newcommand{\prop}[1]{\hyperref[prop:#1]{Proposition~\ref*{prop:#1}}}
\newcommand{\cor}[1]{\hyperref[cor:#1]{Corollary~\ref*{cor:#1}}}
\newcommand{\fig}[1]{\hyperref[fig:#1]{Figure~\ref*{fig:#1}}}
\newcommand{\tab}[1]{\hyperref[tab:#1]{Table~\ref*{tab:#1}}}
\newcommand{\alg}[1]{\hyperref[alg:#1]{Algorithm~\ref*{alg:#1}}}
\newcommand{\app}[1]{\hyperref[app:#1]{Appendix~\ref*{app:#1}}}
\newcommand{\<}{\langle}
\renewcommand{\>}{\rangle}
\DeclareMathOperator{\Tr}{Tr}
\DeclareMathOperator{\re}{Re}
\newcommand{\B}{\{0,1\}}
\newcommand{\Ba}{\{0,1,*\}}
\newcommand{\Bo}{\{0,1,\dagger\}}
\newcommand{\id}{\mathbbold{1}}
\newcommand{\Ind}{\textsc{Ind}}
\newcommand{\UInd}{\textsc{UInd}}
\renewcommand{\th}[1]{$#1^\mathrm{th}$}
\DeclareMathOperator{\adeg}{\widetilde{\deg}}
\DeclareMathOperator{\s}{s}
\DeclareMathOperator{\RC}{RC}
\DeclareMathOperator{\QC}{QC}
\DeclareMathOperator{\RS}{RS}
\DeclareMathOperator{\QS}{QD}
\DeclareMathOperator{\QD}{QD}
\DeclareMathOperator{\RD}{RD}
\DeclareMathOperator{\R}{R}
\DeclareMathOperator{\D}{D}
\DeclareMathOperator{\Q}{Q}
\DeclareMathOperator{\C}{C}
\DeclareMathOperator{\Adv}{Adv}
\DeclareMathOperator{\polylog}{polylog}
\DeclareMathOperator{\Dom}{Dom}
\DeclareMathOperator{\QSZK}{QSZK}
\DeclareMathOperator{\SZK}{SZK}
\newcommand{\sab}{\mathrm{sab}}
\newcommand{\tO}{\widetilde{O}}
\newcommand{\tOmega}{\widetilde{\Omega}}
\newcommand{\tTheta}{\widetilde{\Theta}}
\newcommand{\tr}[1]{{\left\lVert#1\right\rVert}_\mathrm{tr}}
\newcommand{\norm}[1]{{\left\lVert#1\right\rVert}}
\newcommand{\cl}[1]{\mathsf{#1}}
\newcommand{\cc}{\mathrm{cc}}
\newcommand{\CC}{\mathrm{cc}}
\begin{document}
\title{\vspace{-1em} Quantum distinguishing complexity,\\ zero-error algorithms, and statistical zero knowledge}

\author{
Shalev Ben-David\\
\small University of Waterloo\\
\small \texttt{shalev.b@uwaterloo.ca}
\and
Robin Kothari \\
\small Microsoft Research\\
\small \texttt{robin.kothari@microsoft.com}
}

\date{}
\maketitle

\begin{abstract}
We define a new query measure we call quantum distinguishing complexity, denoted $\QS(f)$ for a Boolean function $f$.
Unlike a quantum query algorithm, which must output a state close to $|0\rangle$ on a $0$-input and a state close to $|1\rangle$ on a $1$-input, a ``quantum distinguishing algorithm'' can output any state, as long as the output states for any $0$-input and $1$-input are distinguishable. 

Using this measure, we establish a new relationship in query complexity: 
For all total functions $f$, $\Q_0(f)=\tO(\Q(f)^5)$, where $\Q_0(f)$ and $\Q(f)$ denote the zero-error  and bounded-error quantum query complexity of $f$ respectively, improving on the previously known sixth power relationship.

We also define a query measure based on quantum statistical zero-knowledge proofs, $\QSZK(f)$, which is at most $\Q(f)$. 
We show that $\QD(f)$ in fact lower bounds $\QSZK(f)$ and not just $\Q(f)$. 
$\QD(f)$ also upper bounds the (positive-weights) adversary bound, which yields the following relationships for all $f$: $\Q(f) \geq \QSZK(f) \geq \QS(f) = \Omega(\Adv(f)).$ 
This sheds some light on why the adversary bound proves suboptimal bounds for problems like Collision and Set Equality, which have low QSZK complexity.

Lastly, we show implications for lifting theorems in communication complexity. We show that a general lifting theorem for either zero-error quantum query complexity or for QSZK would imply a general lifting theorem for bounded-error quantum query complexity.
\end{abstract}

{\small \tableofcontents}

\clearpage

\section{Introduction}
\label{sec:intro}

In the model of query complexity, we wish to compute some known Boolean function $f:\B^n \to \B$ on an unknown input $x\in\B^n$ that we can access through an oracle that knows $x$. 
In the classical setting, the oracle responds with $x_i$ when queried with an index $i\in[n]$. 
For quantum models, we use essentially the same oracle, but slightly modified to make it unitary. 
The bounded-error quantum query complexity of a function $f$, denoted $\Q(f)$, is the minimum number of queries to the oracle needed to compute the function $f$ with probability greater than $2/3$ on any input $x$.
In other words, the quantum query algorithm outputs a quantum state that is close to $|f(x)\>$.

In this paper we study ``quantum distinguishing complexity,'' a query measure obtained by relaxing the output requirement of quantum query algorithms. 
Essentially, a quantum distinguishing algorithm for $f$ doesn't need to compute $f(x)$, but merely needs to \emph{behave
differently on input $x$ and input $y$ if $f(x)\ne f(y)$}.
We claim that this weaker notion of computation helps shed
light on quantum query complexity and various lower bound
techniques for it. 
We use quantum distinguishing complexity to prove a
new query complexity relationship for total functions:
$\Q_0(f)=O(\Q(f)^5\log\Q(f))$.
We also use it to explain why the non-negative
adversary bound fails for some problems, to provide
lower bound techniques for the query version of the complexity class
$\cl{QSZK}$, and to prove some reductions between lifting
theorems in communication complexity.

\subsection{Quantum distinguishing complexity}

The \emph{quantum distinguishing complexity} of a function $f:D \to \B$ (where $D\subseteq\B^n$), denoted $\QS(f)$, is the minimum number of queries needed to the input $x\in D$ to produce an output state $|\psi_x\>$, such that the output states corresponding to $0$-inputs and $1$-inputs are nearly orthogonal (or far apart in trace distance). 
Note that the usual bounded-error quantum query complexity of a function $f$, denoted $\Q(f)$, is defined similarly with the additional requirement that there should exist a 2-outcome measurement that (with high probability) accepts states corresponding to $1$-inputs and rejects states corresponding to $0$-inputs. 
Since measurements can only distinguish nearly orthogonal states, every quantum algorithm for computing $f$ satisfies the definition of quantum distinguishing complexity. Hence for all functions $f$, we have $\QS(f) \leq \Q(f)$.
We formally define quantum distinguishing complexity and establish some basic properties in \sec{sabotage}.

This is a natural relaxation of bounded-error quantum query complexity and has been mentioned in passing in several prior works. 
Indeed, Barnum, Saks, and Szegedy call this measure $\textrm{DQA}(f)$ in an early technical report~\cite[Remark 1]{BSS01}. 
This measure often comes up in discussions about the (positive-weights) adversary bound,\footnote{The positive-weights adversary bound should not be confused with the stronger negative-weights adversary bound (also known as the general adversary bound), which essentially equals quantum query complexity~\cite{HLS07,LMR+11}.} a lower bound for quantum query complexity introduced by Ambainis~\cite{Amb02}.
The (positive-weights) adversary bound, which we denote by $\Adv(f)$, has several variants~\cite{Amb02,Amb03,BSS03,LM04,Zha05}, which are all essentially the same~\cite{SS06}.
It was noted in several works~\cite{BSS03,HLS07} that the proof that the adversary bound lower bounds quantum query complexity only uses the fact that the outputs corresponding to $0$-inputs and $1$-inputs are nearly orthogonal, and hence for all functions $\QS(f) = \Omega(\Adv(f))$. However, it is not the case that $\QS(f) = \Theta(\Adv(f))$ for all $f$, and we exhibit functions separating these measures. 

Lastly, we show in \sec{sabotage} that this measure is the quantum analogue of a lower bound method for randomized query complexity called randomized sabotage complexity~\cite{BK16}. Hence this measure could also be called ``quantum sabotage complexity.''

\subsection{Fifth power query relation}
Our first result establishes a new relation between query measures for total functions. 
A total function is  a function of the form $f:\B^n \to \B$, as opposed to a partial function, which is a function of the form $f:D\to\B$, where $D \subseteq \B^n$.
We show a new upper bound on the zero-error quantum query complexity of $f$, denoted $\Q_0(f)$, in terms of its quantum distinguishing complexity, and hence its quantum query complexity. 
The zero-error quantum query complexity of $f$ is the minimum number of queries needed by a quantum algorithm that either outputs the correct answer $f(x)$ on input $x$, or outputs \texttt{?} indicating that it does not know, but does this with probability at most $1/2$ on any input $x$.
In \sec{zero} we prove the following.

\begin{restatable}{theorem}{QzQS}
\label{thm:QzQS}
For all total functions $f:\B^n \to \B$, we have 
\begin{equation}
\Q_0(f) = O(\QS(f)^5 \log \QS(f)) = O(\Q(f)^5 \log \Q(f)).    
\end{equation}
Additionally, the algorithm also outputs a certificate for $f(x)$ when it outputs $f(x)$.
\end{restatable}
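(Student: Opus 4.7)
The strategy is to build a zero-error quantum algorithm that iteratively constructs a certificate for $f(x)$, using the quantum distinguishing algorithm for $f$ as a subroutine. The algorithm maintains a partial assignment $z = x|_S$ on a queried set $S \subseteq [n]$; at each round it either verifies that $f|_z$ is a constant function (in which case it halts, returning that value together with the certificate $(S, z)$) or queries a new index $i \in [n]\setminus S$ and appends $(i, x_i)$ to $z$. The total query cost is at most the number of rounds times the cost per round, and the zero-error property is automatic because $f(x)$ is output only when accompanied by a verified certificate; otherwise the algorithm returns \texttt{?}, and standard amplification drives this failure probability below $1/2$.

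To bound the number of rounds I would combine the relation $\QS(f) = \Omega(\Adv(f))$ noted earlier in the excerpt with the standard positive-weights adversary lower bound $\Adv(f) = \Omega(\sqrt{\bs(f)})$ to obtain $\bs(f) = O(\QS(f)^2)$. For total functions, Nisan's inequality $\C(f) \le \bs(f)^2$ then yields $\C(f) = O(\QS(f)^4)$. If each round queries a bit lying in some minimum certificate of $f(x)$ consistent with the current $(S, z)$, the loop terminates after at most $\C(f) = O(\QS(f)^4)$ rounds.

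For the cost per round: given $(S, z)$ with $f|_z$ non-constant, the subroutine must output a useful $i \in [n]\setminus S$ in $\tO(\QS(f))$ queries. I would implement it by running the quantum distinguishing algorithm for $f|_z$, which uses $\QS(f|_z) \le \QS(f)$ real-oracle queries (bits of $S$ being hardwired), amplifying its distinguishability via standard repetition, and then invoking a Grover-style search over candidate indices with the amplified distinguishing routine as a verifier for the predicate ``querying $x_i$ strictly reduces the complexity of the restricted function.'' Multiplying, the total query count is $\tO(\C(f) \cdot \QS(f)) = \tO(\QS(f)^5 \log \QS(f))$, and when the loop terminates the witnessed set $S$ with values $z$ is a certificate for $f(x)$, which the algorithm outputs alongside the answer.

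The main obstacle is making the per-round subroutine cost only $\tO(\QS(f))$ real-oracle queries. The distinguishing algorithm outputs an abstract quantum state rather than a classical index, so extracting a useful bit to query requires combining distinguishing with amplitude amplification and a search or divide-and-conquer procedure (for instance, splitting $[n]\setminus S$ in halves and running the distinguishing test on each half to detect non-constancy, then recursing). Showing that this reliably locates a progress-making bit in $\tO(\QS(f))$ queries is the main technical hurdle; granted that, the overall $\tO(\QS(f)^5 \log \QS(f))$ bound follows, and since $\QS(f) \le \Q(f)$ we also get $\Q_0(f) = O(\Q(f)^5 \log \Q(f))$.
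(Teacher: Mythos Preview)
Your proposal has a genuine gap at exactly the point you flag as the ``main technical hurdle'': you never give a working per-round subroutine. The distinguishing algorithm outputs an abstract state, and neither of your suggestions for extracting a useful index from it actually works. A Grover search needs a Boolean predicate you can evaluate on a candidate $i$; ``querying $x_i$ strictly reduces the complexity of the restricted function'' is not something the distinguishing routine verifies, and there is no obvious way to turn a state-output algorithm into such a verifier. The divide-and-conquer idea is similarly underspecified: the distinguishing algorithm is run on an input $x$, not on a subset of coordinates, so ``running the distinguishing test on each half of $[n]\setminus S$'' has no clear meaning, and restricting $f$ by guessing values on one half does not obviously isolate a certificate bit.

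The paper's proof takes a quite different route and supplies exactly the missing idea. The key tool is the hybrid argument: if you run the distinguishing algorithm for a uniformly random number $t\in[T]$ of steps and then \emph{measure the query register}, the resulting index lands inside any fixed sensitive block $B$ with probability $\Omega(1/T^2)$. This is how a classical index is extracted from the state-output algorithm. The paper then does \emph{not} iterate toward a certificate one bit at a time; instead it repeats this sampling procedure $O(T^2\,\s(f)\log\RC(f))$ times and uses the Kulkarni--Tal bound (at most $O(\RC(f)^{\s(f)})$ minimal sensitive blocks) together with a union bound to argue that the collected indices hit every minimal sensitive block, and hence form a certificate, with constant probability. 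The arithmetic $\s(f)=O(\QC(f)^2)=O(\QS(f)^2)$ then gives $\Q_0(f)=O(\QS(f)^3\cdot\QS(f)^2\log\QS(f))$. So the missing ingredient in your approach is precisely the hybrid-argument index extraction, and the global structure (one-shot union bound over sensitive blocks via Kulkarni--Tal, rather than $\C(f)$ iterative rounds) is also different from what you propose.
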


This is an improvement over the previous best relationship between zero-error and bounded-error quantum query complexity,  $\Q_0(f) = O(\Q(f)^6)$~\cite{BBC+01}, which follows from $\D(f) = O(\Q(f)^6)$, where $\D(f)$ is deterministic query complexity. 
In fact, our result is the first upper bound on zero-error quantum query complexity that does not follow from an upper bound on zero-error randomized query complexity. 
Our proof is borrows ideas from the classical result $\R_0(f) = O(\R(f)^2 \log \R(f))$~\cite{Mid05,KT13}, which is essentially optimal
due to a nearly matching  separation by Ambainis et al.~\cite{ABB+15}.

\subsection{Quantum statistical zero knowledge} 
Next we show that, surprisingly, quantum distinguishing complexity lower bounds a more powerful model of computation than quantum query complexity: the query complexity of computing a function using a quantum statistical zero-knowledge (QSZK) proof system.
A QSZK proof system is an interactive protocol between a quantum verifier and a computationally unbounded, but untrusted prover in which the verifier learns the value of $f(x)$ but learns essentially no more. 
QSZK can also be characterized in terms of its complete problem Quantum State Distinguishability~\cite{Wat02,Wat09}.

In \sec{QSZK}, we discuss the history of quantum statistical zero-knowledge proofs and define an associated query measure $\QSZK(f)$ based on the complete problem Quantum State Distinguishability.
We establish some basic properties of our definition, such as $\QSZK(f) \leq \Q(f)$, which corresponds to the complexity class containment $\cl{BQP} \subseteq \cl{QSZK}$.
We then show that quantum distinguishing complexity lower bounds QSZK complexity.

\begin{restatable}{theorem}{QSQSZK}
\label{thm:QSQSZK}
For all (partial) Boolean functions $f$, $\QS(f) \leq \QSZK(f)$.
\end{restatable}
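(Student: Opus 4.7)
The plan is to unwrap the definition of $\QSZK(f)$, which is based on the Quantum State Distinguishability complete problem, and turn the resulting pair of ``close vs.\ far'' quantum query subroutines into a distinguishing algorithm by attaching a classical coin flip to their outputs.

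Concretely, let $t = \QSZK(f)$. First I would note that by the definition there exist two $t$-query quantum subroutines $Q_0, Q_1$ whose output states $\rho_b^x := Q_b(x)|0\>\<0|Q_b(x)^\dagger$ satisfy $\tr{\rho_0^x - \rho_1^x} \leq c_{\text{close}}$ when $f(x) = 0$ and $\tr{\rho_0^x - \rho_1^x} \geq c_{\text{far}}$ when $f(x) = 1$, for some thresholds $c_{\text{close}} < c_{\text{far}}$. The distinguishing algorithm I would then build is: flip a classical fair coin $b \in \B$, run $Q_b$ on $x$ using at most $t$ queries, and output $b$ together with $Q_b(x)|0\>$. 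This produces the mixed state $\sigma_x = \tfrac{1}{2}|0\>\<0| \otimes \rho_0^x + \tfrac{1}{2}|1\>\<1| \otimes \rho_1^x$ using $t$ queries in total.

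The distinguishability is then a one-line triangle inequality. Because the two branches of $\sigma_x$ live on orthogonal subspaces of the coin register, $\tr{\sigma_x - \sigma_y} = \tfrac{1}{2}\tr{\rho_0^x - \rho_0^y} + \tfrac{1}{2}\tr{\rho_1^x - \rho_1^y}$. For any $0$-input $x$ and $1$-input $y$, the triangle inequality applied along the cycle $\rho_0^y \to \rho_0^x \to \rho_1^x \to \rho_1^y$ yields
\[
\tr{\rho_0^x - \rho_0^y} + \tr{\rho_1^x - \rho_1^y} \;\geq\; \tr{\rho_0^y - \rho_1^y} - \tr{\rho_0^x - \rho_1^x} \;\geq\; c_{\text{far}} - c_{\text{close}},
\]
so $\tr{\sigma_x - \sigma_y}$ is bounded below by a positive constant, and standard amplification then meets whatever near-orthogonality threshold the $\QS$ definition demands.

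The main obstacle I expect is getting the query count exactly $\QSZK(f)$ rather than $2\QSZK(f)$: a coherent implementation that puts the coin in superposition and runs ``controlled-$Q_b$'' would generally spend $\approx 2t$ queries, since each of $Q_0$ and $Q_1$ must be executed coherently on the input oracle. Using a classical coin flip avoids this, but it relies on the $\QS$ model permitting a mixed output (or equivalently, outputting the entangled purification $\tfrac{1}{\sqrt 2}(|00\>\otimes Q_0(x)|0\> + |11\>\otimes Q_1(x)|0\>)$ and invoking monotonicity of trace distance under partial trace of the ancilla holding $b$). Once this distinguishability transfer is in place, tuning the constants so that the ``far minus close'' gap meets the $\QS$ threshold is routine.
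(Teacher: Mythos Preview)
Your proof is correct and rests on the same triangle-inequality core as the paper's. The paper packages the two $\QSZK$ subroutines slightly differently: it runs both and outputs the tensor product $\rho_x\otimes\sigma_x$, then lower-bounds $\tr{\rho_x\otimes\sigma_x-\rho_y\otimes\sigma_y}$ by $\max\{\tr{\rho_x-\rho_y},\tr{\sigma_x-\sigma_y}\}$ via monotonicity under partial trace, whereas you output the coin-labelled mixture and get the average $\tfrac12(\tr{\rho_0^x-\rho_0^y}+\tr{\rho_1^x-\rho_1^y})$ exactly. Both then invoke the same chain $\tr{\rho_0^y-\rho_1^y}\le\tr{\rho_0^y-\rho_0^x}+\tr{\rho_0^x-\rho_1^x}+\tr{\rho_1^x-\rho_1^y}$ to extract the $1/6$ gap, so the two arguments are really rearrangements of one another; your block-diagonal computation is arguably a touch cleaner.

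One small correction: in the paper's definition of $\QSZK$, the two subroutines make $k$ queries \emph{in total}, not $k$ each. Hence the ``$2t$ vs.\ $t$'' obstacle you worry about never arises---running both subroutines sequentially already costs exactly $\QSZK(f)$ queries, which is why the paper can simply output $\rho_x\otimes\sigma_x$. Your coin-flip construction remains valid (it uses at most $\max(k_0,k_1)\le k_0+k_1=k$ queries), and with the paper's constants $c_{\text{far}}=2/3$, $c_{\text{close}}=1/3$ you hit the $1/6$ threshold on the nose, so the amplification step is unnecessary.
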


As a corollary of \thm{QSQSZK} and $\QS(f) = \Omega(\Adv(f))$, we have for all (partial) functions $f$,
\begin{equation}
\Q(f) \geq \QSZK(f) \geq \QS(f) = \Omega(\Adv(f)).    
\end{equation}
This sheds some light on why the adversary bound sometimes proves poor lower bounds: it lower bounds a more powerful model of computation!
For example, it is well known that the adversary bound cannot prove a super-constant lower bound for the collision problem~\cite{AS04}.
It is also easy to see that the collision problem has a constant-query QSZK (and even classical SZK) protocol.

On the bright side, this gives us a new way to prove lower bounds on QSZK query complexity and prove oracle separations against the complexity class $\cl{QSZK}$. 
For example, since we know the OR function on $n$ bits has $\Adv(\textrm{OR}) = \Omega(\sqrt{n})$, this yields an oracle $A$ such that $\cl{NP}^A \nsubseteq \cl{QSZK}^A$, since the OR function has small certificates.
A similar strategy was used recently by Menda and Watrous to show oracle separations against $\cl{QSZK}$~\cite{MW18}.

\subsection{Comparison with other lower bounds}

We compare quantum distinguishing complexity to the two main lower bound techniques for quantum query complexity: the (positive-weights) adversary bound and the polynomial method. Recall that the negative-weights adversary or general adversary completely characterizes quantum query complexity, so we do not compare quantum distinguishing complexity with it.

As noted earlier, the adversary bound is weaker than quantum distinguishing complexity since for all (partial) functions $f$, $\QS(f) = \Omega(\Adv(f))$.
This implies that $\QS(f)$ coincides with $\Q(f)$ for most functions studied in the literature, since most quantum lower bounds are proved using the adversary method. 
Moreover, not only is quantum distinguishing complexity always larger than the adversary bound, it can be exponentially larger for partial functions and quadratically larger for total functions as we show in \thm{QSAdvdeg}. 

Another popular lower bound technique is the polynomial method~\cite{BBC+01}, which uses the fact that the approximate degree of a function lower bounds $\Q(f)$. 
The approximate degree of a Boolean function $f$, denoted $\adeg(f)$, is the minimum degree of a real polynomial $p(x)$ over the input variables such that for all inputs $x$ we have $|f(x)-p(x)|\leq 1/3$. 

\setlength{\intextsep}{0pt}%
\setlength{\columnsep}{20pt}%
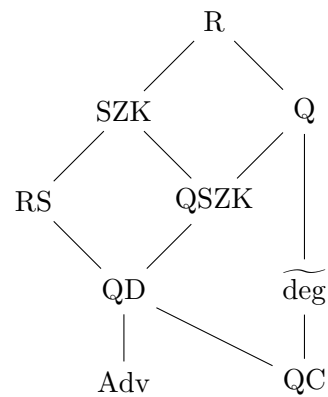
\begin{wrapfigure}{r}{0.30\textwidth}
\centering
   \begin{tikzpicture}[x=1.2cm,y=1.2cm]

     \node (R) at(2,4){$\R$};
     \node (RS) at(0,2){$\RS$};
     \node (QSZK) at(2,2){$\QSZK$};
     \node (SZK) at(1,3){$\SZK$};
     \node (Q) at(3,3){$\Q$};
     \node (adeg) at(3,1.05){$\adeg$};
     \node (QS) at(1,1){$\QS$};
     \node (adv) at(1,0){$\Adv$};
     \node (QC) at(3,0){$\QC$};
     
     \path[-] (SZK) edge (R);
     \path[-] (Q) edge  (R);
     \path[-] (RS) edge (SZK);
     \path[-] (QSZK) edge (SZK);
     \path[-] (QSZK) edge (Q);
     \path[-] (adeg) edge (Q);
     \path[-] (QS) edge (RS);
     \path[-] (QS) edge (QSZK);
     \path[-] (adv) edge (QS);    
     \path[-] (QC) edge (adeg);
     \path[-] (QC) edge (QS);    
   \end{tikzpicture}
    \caption{Relationships between measures. An upward line indicates that a measure is asymptotically upper bounded by the other measure. E.g., for all (partial) functions $f$, $\Q(f) = O(\R(f))$.\label{fig:relations}}
\end{wrapfigure}

We do not know an exponential separation  between quantum distinguishing complexity and approximate degree (for a partial function), since it is not even known if quantum query complexity can be exponentially larger than approximate degree for a partial functions. 
We do, however, show in \thm{QSAdvdeg} that quantum distinguishing complexity can be polynomially larger than approximate degree for total functions.

\begin{restatable}{theorem}{QSAdvdeg}
\label{thm:QSAdvdeg}
There exist total functions $f$ and $g$ with
\begin{equation}
\QS(f) = \tOmega(\Adv(f)^2) \enskip \mathrm{and} \enskip  \QS(g) \geq \adeg(g)^{4-o(1)}.
\end{equation}
There also exists an $n$-bit partial function $h$ with 
\begin{equation}
\QS(h) = \tOmega(n^{1/3}) \enskip  \mathrm{and}  \enskip  \Adv(h) = O(\log n).
\end{equation}
\end{restatable}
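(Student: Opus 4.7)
The plan is to handle the three separations with two different techniques. For the total-function parts (1) and (2) I would use a cheat-sheet construction in the spirit of Aaronson--Ben-David--Kothari: cheat-sheet functions have the feature that ``structural'' measures like $\Adv$ and $\adeg$ are preserved up to polylog factors, while measures tied to the algorithm's output get boosted toward a zero-error-style complexity of the base function. For the partial-function separation (3), I would adapt the polynomial-method lower bound of Aaronson--Shi for the collision problem to the distinguishing setting.

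For part (1), take the base function $g = \OR_n$, for which $\Adv(g) = \Theta(\sqrt{n})$ while $\Q_0(g) = \Theta(n)$, and set $f = g_\CS$. Two facts would suffice: (a) cheat sheets preserve the adversary bound up to polylogs, giving $\Adv(f) = \tO(\sqrt{n})$; and (b) $\QS(f) = \tOmega(n)$. The intuition behind (b) is that making the $\QS$ output states on $0$- and $1$-inputs of $f$ nearly orthogonal forces the algorithm to identify the correct cheat-sheet pointer with essentially zero error on the embedded base input, lower-bounding by $\Q_0(g)$. Combining (a) and (b) gives $\QS(f) = \tOmega(\Adv(f)^2)$. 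Part (2) would follow the same template with a base function $g$ for which an ABK-style construction delivers $\R_0(g) = \tOmega(\adeg(g)^{4-o(1)})$: cheat sheets preserve $\adeg$ up to polylogs, and the argument from (b) gives $\QS(g_\CS)$ at least the zero-error complexity of the base, yielding the fourth-power separation.

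For part (3), take $h$ to be (a Boolean encoding of) the collision problem. The upper bound $\Adv(h) = O(\log n)$ is the certificate-complexity barrier, since collision admits short certificates of both types. For the lower bound $\QS(h) = \tOmega(n^{1/3})$, the plan is to lift the Aaronson--Shi polynomial argument. A $T$-query quantum distinguishing algorithm produces output states $|\psi_x\rangle$ such that the Gram-matrix entries $p(x,y) = \langle \psi_x|\psi_y\rangle$ are polynomials of degree at most $2T$ in the joint input, with $p(x,x) = 1$ and $|p(x,y)|$ forced small whenever $f(x) \ne f(y)$. Symmetrizing over the $S_n$ action on collision inputs collapses this to a univariate polynomial-approximation question to which the Aaronson--Shi estimate applies, yielding $T = \Omega(n^{1/3})$.

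The main obstacle I expect is fact (b) in part (1), and its analogue in part (2): showing that the $\QS$ of a cheat-sheet function lower-bounds the \emph{zero-error} complexity of the base, not merely its bounded-error complexity. A naive argument only delivers $\QS(f_\CS) = \tOmega(\Q(g))$, which loses a square-root factor and falls short of both the quadratic relation in part (1) and the quartic relation in part (2). Threading the zero-error boost through the $\QS$ model, rather than through the $\R_0$ model for which the cheat-sheet technique was originally designed, is the technical heart of the proof.
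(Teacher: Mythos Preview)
Your proposal for part (3) has a fatal flaw: $\QS$ of the collision problem is $O(1)$, not $\tOmega(n^{1/3})$. The paper points this out explicitly (Section~3.1): the one-query algorithm that outputs $\frac{1}{\sqrt{n}}\sum_i |i,x_i\rangle$ already puts the output states at constant trace distance whenever a $1$-to-$1$ input is paired with a $2$-to-$1$ input, since they disagree on half their coordinates. So no lower-bound technique can give $\QS(\text{collision}) = \tOmega(n^{1/3})$, and your Gram-matrix polynomial argument must break down. Concretely, the $\QS$ condition only forces $|\langle\psi_x|\psi_y\rangle| \le \sqrt{1-1/36}$ when $f(x)\ne f(y)$, which is far too close to $1$ to drive an Aaronson--Shi style degree bound; symmetrizing cannot rescue this.

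The paper's approach to all three parts is uniform and avoids the obstacle you flagged for (1) and (2). The key tool is an index-composition theorem (\thm{Ind}): for any $f$ and $k \gtrsim \log \Q(f)$, composing $f$ into the first $k$ bits of the index function $\Ind_k$ forces $\QS(\Ind_k \circ_k f) = \Theta(\Q(\Ind_k \circ_k f)) = \Theta(k\Q(f))$. This is proved via the hybrid argument together with a strong direct product theorem for $\Q$, and it \emph{is} the zero-error-style boost you were reaching for with cheat sheets, but established directly for $\QS$. Then: for (1), start from $k$-sum which already has $\Q = \tOmega(\Adv^2)$; for (2), start from the ABK function with $\Q \ge \adeg^{4-o(1)}$; for (3), start from collision with $\Q = \Theta(n^{1/3})$ and $\Adv = O(1)$. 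In each case the index composition pushes $\QS$ up to $\Q$ of the base, while $\Adv$ and $\adeg$ compose only multiplicatively with the small gadget and stay essentially unchanged. Your cheat-sheet intuition was in the right neighborhood, but the index gadget is the clean instantiation, and it is what makes part (3) work at all.
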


This theorem is proved in  \sec{separations}. \fig{relations} shows the known relationships between all the measures discussed in this paper. The measures $\RS$ and $\QC$ are introduced later, and refer to randomized sabotage complexity and quantum certificate complexity, respectively. 

\subsection{Lifting theorems}
Most measures in query complexity have an analogous measure in communication complexity, which we denote with the superscript cc, such as $\Q^\mathrm{\CC}(F)$ and $\QSZK^\mathrm{\CC}(F)$. 
A lifting theorem is a result that transfers a lower bound on a query function $f$ to a lower bound in communication complexity for a lifted version of the function $f$, obtained by composing the function $f$ with a hard communication problem $G$. For example, a lifting theorem is known for deterministic protocols, which means there exists a communication problem $G$ such that for all functions $f$, $\D^\mathrm{\CC}(f \circ G) = \Omega(\D(f))$~\cite{RM99,GPW15}.

Lifting theorems have been shown for some measures, such as nondeterministic query complexity~\cite{GLM+16} and (zero-error or bounded-error) randomized query complexity~\cite{GPW17}, and remain open for measures like zero-error and bounded-error quantum query complexity.
Our next result, proved in \sec{lifting}, shows that if we could prove a lifting theorem for zero-error quantum query complexity or for QSZK query complexity, then we would get a lifting theorem for bounded-error quantum query complexity.

\begin{theorem}[informal]
If a general lifting theorem holds using some gadget $G$ for either zero-error quantum query complexity, i.e., $\Q_0^\mathrm{\CC}(f \circ G) = \tOmega(\Q_0(f))$, or for quantum statistical zero-knowledge protocols, i.e., $\QSZK^\mathrm{\CC}(f \circ G) = \tOmega(\QSZK(f))$, then we obtain a general lifting theorem for bounded-error quantum query complexity (up to logarithmic factors) with the same gadget $G$.
\end{theorem}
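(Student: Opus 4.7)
The plan is to combine the assumed lifting theorems with the measure relationships already established in the paper: $\QS(f) \leq \QSZK(f) \leq \Q(f)$ and the fifth-power bound $\Q_0(f) = \tO(\QS(f)^5)$ from \thm{QzQS}. The key preparatory step is to transfer these relationships from the query setting to the communication setting. The containment $\QSZK^{\cc}(F) \leq \Q^{\cc}(F)$ is essentially immediate: any bounded-error quantum communication protocol is already a QSZK protocol in which the prover sends nothing. The fifth-power relation $\Q_0^{\cc}(F) \leq \tO(\Q^{\cc}(F)^5)$ requires a more careful adaptation of the proof of \thm{QzQS}, and I expect this to be the main technical step. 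That query-complexity proof alternates a bounded-error subroutine with a Grover-style search over candidate certificates; in the cc setting we need to check that both components have analogues of roughly the same cost (up to polylogarithmic overhead) and that the combinatorial bookkeeping about certificates carries over when the input is partitioned between Alice and Bob.

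With these cc analogs in hand, both cases of the theorem reduce to short chains of inequalities. For the zero-error assumption $\Q_0^{\cc}(f \circ G) = \tOmega(\Q_0(f))$, combining with the trivial $\Q_0(f) \geq \Q(f)$ gives $\Q_0^{\cc}(f \circ G) = \tOmega(\Q(f))$; plugging in the cc fifth-power bound on the left yields $\Q^{\cc}(f \circ G)^5 = \tOmega(\Q(f))$, hence $\Q^{\cc}(f \circ G) = \tOmega(\Q(f)^{1/5})$. For the QSZK assumption $\QSZK^{\cc}(f \circ G) = \tOmega(\QSZK(f))$, combining with $\QSZK^{\cc} \leq \Q^{\cc}$ gives $\Q^{\cc}(f \circ G) = \tOmega(\QSZK(f)) \geq \tOmega(\QS(f))$; raising this to the fifth power and invoking \thm{QzQS} once more gives $\Q^{\cc}(f \circ G)^5 = \tOmega(\QS(f)^5) \geq \tOmega(\Q_0(f)) \geq \tOmega(\Q(f))$, giving the same polynomial lifting.

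The main obstacle is the cc adaptation of \thm{QzQS}. I would handle it by treating each verification of a candidate certificate as a black-box invocation of the BE communication protocol, and then applying standard amplitude amplification and quantum search on top of those invocations. The combinatorial arguments about certificate sizes and unions are inherited from the query proof, and should survive the lift provided that the gadget $G$ is fixed and uniform enough that the cost of a single ``oracle call'' in the lifted problem is well-defined. Once the cc fifth-power bound is in hand, the remaining arithmetic in both cases is exactly the chain of inequalities above.
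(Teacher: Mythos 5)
Your plan hinges on a communication-complexity analogue of the fifth-power relation, namely $\Q_0^{\cc}(F) = \tO(\Q^{\cc}(F)^5)$, but this is not merely ``a more careful adaptation'' of \thm{QzQS}---it is almost certainly false. The query-complexity proof depends on decision-tree structure: the hybrid argument locates a sensitive block, the number of minimal sensitive blocks is bounded by $\RC(f)^{\s(f)}$, and a set of queried bits hitting every sensitive block is a certificate. None of this transfers to communication, where certificates are monochromatic rectangles and can be exponentially costly to verify even when bounded-error protocols are cheap. The classical analogue is already a known failure: equality has $\R^{\cc} = O(\log n)$ but $\R_0^{\cc} = \Omega(n)$, so no polynomial relationship between zero-error and bounded-error communication exists. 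Moreover, even if such a bound did hold, your chain of inequalities would only yield $\Q^{\cc}(f\circ G) = \tOmega(\Q(f)^{1/5})$, a polynomial loss rather than the ``up to logarithmic factors'' loss the theorem claims.

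The paper sidesteps both problems by never trying to relate $\Q_0^{\cc}$ and $\Q^{\cc}$ in general. Instead it composes $f$ with an index gadget on the query side before lifting: for the $\Q_0$ case it sets $g = \UInd_k \circ_k f$ with $k = \Theta(\log\Q(f))$ and applies the assumed lifting theorem to $g$. \thm{Ind} guarantees $\Q_0(g) \geq \Q(g) = \tOmega(\Q(f))$, and---this is the key point you are missing---a bounded-error protocol for $f\circ G$ can be converted to a \emph{zero-error} protocol for $g\circ G$ with only $\tO(1)$ overhead, because the $\UInd$ promise lets you verify the answer by reading one marked bit. This is a relationship between $\Q_0^{\cc}(g\circ G)$ and $\Q^{\cc}(f\circ G)$ that exploits the particular structure of $g$, not a generic $\Q_0^{\cc}$ vs.\ $\Q^{\cc}$ polynomial bound. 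The QSZK case is analogous using $g = \Ind_k\circ_k f$, $\QSZK(g) \geq \QS(g) = \tOmega(\Q(f))$ from \thm{Ind} and \thm{QSQSZK}, and the trivial $\QSZK^{\cc} \leq \Q^{\cc}$. In both cases the index composition is doing the work that you are trying (and failing) to do with a communication-side fifth-power theorem.
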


In fact, the same conclusion follows from a weaker assumption. We can assume that the lifting theorem proves a lower bound on bounded-error quantum communication complexity assuming a lower bound on quantum distinguishing complexity. In other words, we can assume a lifting theorem of the form $\Q^\mathrm{\CC}(f \circ G) = \tOmega(\QS(f))$, which is weaker than a QSZK lifting theorem since it assumes a stronger lower bound and proves a weaker one.

\section{Preliminaries}
\label{sec:prelim}

We assume the reader is generally familiar with quantum computation~\cite{NC00} and query complexity (for more details, see \cite{BdW02}). We do not assume the reader is familiar with statistical zero-knowledge protocols.

For any positive integer $n$, let $[n] = \{1,\ldots,n\}$. We use $f(n) = \tO(g(n))$ to mean there exists a constant $k$ such that $f(n) = O(g(n) \log^k g(n))$ and similarly $f(n) = \tOmega(g(n))$ means $f(n) = \Omega(g(n)/\log^k g(n))$ for some constant $k$.

\subsection{Distance measures}
For any matrix $A$, we define the spectral norm of $A$, denoted $\norm{A}$ as the largest singular value of $A$. The $1$-norm of $A$, denoted $\norm{A}_1$, is defined as $\Tr\big(\sqrt{A^\dagger A}\big)$, which is also equal to the sum of the singular values of $A$. 

We define the trace distance between two quantum states $\rho$ and $\sigma$ as $\tr{\rho-\sigma} = \frac{1}{2}\norm{\rho-\sigma}_1$. 
The factor of $1/2$ makes this distance measure lie between $0$ and $1$ for density matrices.
Trace distance is a useful distance measure since it exactly captures distinguishability of states and is non-increasing under quantum operations~\cite[Th.~9.2]{NC00}.
For pure states $|\psi\>$ and $|\phi\>$, trace distance is related to their inner product as follows~\cite[eq.~1.186]{Wat18}. 
\begin{equation}
\tr{|\psi\rangle\langle\psi| - |\phi\rangle\langle\phi|} = \sqrt{1-|\<\psi|\phi\>|^2}.\label{eq:tr}
\end{equation}

\subsection{Quantum query complexity} 
In query complexity, we wish to compute a Boolean function $f$ on an input $x$ given query access to the bits of $x$. 
In this paper, we will mostly deal with functions with Boolean input and output.
An $n$-bit function $f:\B^n \to \B$ is called a \emph{total function}. 
An $n$-bit function $f:D\to\B$, where $D \subseteq \B^n$, is called a \emph{partial function} since it is defined on a subset of $\B^n$.
We will also refer to this subset $D$ as the domain of $f$, or $\Dom(f)$.
The goal in query complexity is to compute $f(x)$ while making the fewest queries to the oracle for the bits of $x$.

Classical algorithms have access to an oracle that given an index $i\in [n]$ outputs $x_i$, the \th{i} bit of $x$. 
A quantum algorithm is allowed access to a unitary map that implements this oracle, and is usually taken to be the unitary $O_x$ which acts as follows on inputs $i\in[n]$ and $b\in\B$:
$O_x |i,b\> = |i,b \oplus x_i\>$.
A quantum algorithm that uses the gate $O_x$ in its circuit $k$ times is said to have made $k$ queries to the oracle. 

Since we do not count the complexity of any other gates used in the algorithm, we can assume a $k$-query quantum algorithm always starts with the all-zeros state $|0^m\>$ and applies an oracle-independent unitary $U_0$ followed by the oracle $O_x$ and so on. 
Thus a $k$-query quantum algorithm is specified by $k+1$ oracle-independent unitaries $U_0, \ldots, U_k$, which act on $m$ output qubits. 
The state output by the quantum algorithm is $|\psi_x\> = U_k O_x U_{k-1} O_x \cdots O_x U_1 O_x U_0 |0^m\>$,  where $O_x$ is implicitly $(O_x \otimes \id)$ if $U_i$ acts on more qubits than $O_x$. 
If the quantum algorithm outputs a mixed state, then we assume it traces out some subset $S$ of the $m$ qubits, and hence outputs $\Tr_S(|\psi_x\>\<\psi_x|)$.
If the quantum algorithm outputs a bit, then we assume it measures the first qubit in the standard basis and outputs the result of that measurement.

We can now define the various complexity measures associated with quantum query complexity. 
We say the \emph{bounded-error quantum query complexity} of computing a Boolean function $f$, $\Q(f)$, is the minimum $k$ such that there exists a $k$-query quantum algorithm that on every $x \in \Dom(f)$ outputs $f(x)$ with probability greater than or equal to $2/3$. 
As usual, the constant $2/3$ is unimportant as long as it is a constant strictly greater than half, due to standard error reduction.

A \emph{zero-error quantum algorithm} (or a Las Vegas quantum algorithm) never outputs an incorrect answer on an input $x \in \Dom(f)$, but is allowed to claim ignorance and answer \texttt{?} with probability at most $1/2$. 
The \emph{zero-error quantum query complexity} of $f$, $\Q_0(f)$ is the minimum number of queries needed for a zero-error quantum algorithm to compute $f$. 
Note that $\Q(f) \leq \Q_0(f)$, since a zero-error algorithm can be turned into a bounded-error algorithm by simply outputting a random bit when the zero-error algorithm outputs \texttt{?}.

For zero-error quantum algorithms, there is a subtlety to do with whether or not the algorithm also produces a classical certificate for the input $x$. 
A certificate for $x$ is a subset of bits of $x$, such that the value of $f(x)$ is completely determined by reading these bits alone. 
A classical zero-error algorithm can always be assumed to output such a certificate without loss of generality.
However, this is not known to be true for zero-error quantum algorithms, and zero-error quantum algorithms that also output a certificate when they output a non-\texttt{?} answer are called \emph{self-certifying} algorithms~\cite{BCW+99}. 
All the zero-error quantum algorithms in this paper are self-certifying, which makes our results stronger since we only prove upper bounds on zero-error quantum algorithms.

\section{Quantum distinguishing complexity}
\label{sec:sabotage}

\subsection{Definition}

We now define quantum distinguishing complexity more formally. 
As explained in the introduction, instead of requiring that the quantum algorithm output the value of the function $f(x)$, as in standard quantum query complexity, we only want the quantum algorithm's outputs to be distinguishable (or nearly orthogonal) for $0$-inputs and $1$-inputs.

As an example of how these definitions differ, consider the collision problem. 
In this problem, we are given an input $x\in [n]^n$ and we are promised that if we view $x$ as a function from $[n]\to[n]$, the function is either $1$-to-$1$ or $2$-to-$1$. 
The goal is to distinguish these two cases under the assumption that the input satisfies this promise. 
In this problem, since every $0$-input and $1$-input differ in exactly half the positions $i\in[n]$, our quantum algorithm can simply create the state $|\psi_x\> = \frac{1}{\sqrt{n}}\sum_i |i,x_i\>$ and the states corresponding to $0$-inputs and $1$-inputs will have trace distance $\Omega(1)$.
Thus this problem has quantum distinguishing complexity $O(1)$, but its quantum query complexity is $\Theta(n^{1/3})$~\cite{AS04}.

\begin{definition}[Quantum Distinguishing complexity]
\label{def:QS}
Let $f:D\to \B$, where $D \subseteq \B^n$, be an $n$-bit partial function. $\QS(f)$ is defined as the smallest integer $k$ such that there exists a $k$-query quantum algorithm that on input $x\in D$ outputs a quantum state $\rho_x$ such that 
\begin{equation}
\forall x,y \in D \textrm{ with } f(x) \neq f(y), \quad \tr{\rho_x - \rho_y} \geq 1/6.
\end{equation}
\end{definition}

Note that the definition is robust to minor changes. First, we allow outputting mixed states, although this does not offer any additional power over only outputting pure states. 
The reason is that we can always assume that the quantum algorithm is pure until the final step where some subset of qubits is traced out. 
But if two states are far apart in trace distance after a partial trace, then they were far apart to begin with since trace distance is non-increasing under partial trace.

The constant $1/6$ in \defn{QS} is also arbitrary and any constant in $(0,1)$ would not change the measure by more than a multiplicative constant. 
This is because we can increase the trace distance between the states by outputting multiple copies of the states. We choose the constant $1/6$ purely for aesthetic reasons: This choice ensures that the result in \thm{QSQSZK} has no constant factors.

\subsection{Properties}

We can now establish some basic properties of quantum distinguishing complexity. First, let us formally show that quantum distinguishing complexity lower bounds quantum query complexity. 

\begin{proposition}
\label{prop:QSQ}
For all (partial) Boolean functions $f$, $\QS(f) \leq \Q(f)$.
\end{proposition}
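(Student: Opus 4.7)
The plan is to directly convert any bounded-error quantum query algorithm for $f$ into a quantum distinguishing algorithm using the same number of queries, by simply taking its pre-measurement state as the output state $\rho_x$. The key idea is that trace distance operationally characterizes distinguishability under measurements, so bounded-error computability of $f$ already forces the states on $0$-inputs and $1$-inputs to be measurably separated.

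More concretely, I would start by fixing an optimal $\Q(f)$-query bounded-error quantum algorithm $\mathcal{A}$ for $f$; by definition, this algorithm produces some final state on which a measurement of the first qubit in the standard basis yields $f(x)$ with probability at least $2/3$. Let $\rho_x$ denote the (possibly mixed) state of the output qubit just before this measurement, so $\Tr(|f(x)\>\<f(x)|\,\rho_x) \geq 2/3$. I would then propose $\rho_x$ as the output of a quantum distinguishing algorithm with the same query count.

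To verify the distinguishing condition, I would use the standard operational characterization
\[
\tr{\rho - \sigma} = \max_{0 \preceq P \preceq I} \Tr\bigl(P(\rho - \sigma)\bigr).
\]
Applying this with the projection $P = |0\>\<0|$ to any pair $x,y \in \Dom(f)$ with $f(x) = 0$ and $f(y) = 1$ gives
\[
\tr{\rho_x - \rho_y} \;\geq\; \Tr\bigl(|0\>\<0|\,\rho_x\bigr) - \Tr\bigl(|0\>\<0|\,\rho_y\bigr) \;\geq\; \tfrac{2}{3} - \tfrac{1}{3} \;=\; \tfrac{1}{3} \;\geq\; \tfrac{1}{6},
\]
and symmetrically for the reverse case. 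This exactly matches \defn{QS}, so $\rho_x$ witnesses $\QS(f) \leq \Q(f)$.

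There is no real obstacle here; the argument is essentially unpacking the definition of trace distance. The only thing to be mindful of is the trivial bookkeeping about working with the pre-measurement state rather than the measurement outcome, and noting that any ancillas traced out by $\mathcal{A}$ can be traced out of $\rho_x$ as well without increasing trace distance (by monotonicity), so the resulting distinguisher is a legitimate $\Q(f)$-query quantum algorithm in the sense of \sec{prelim}.
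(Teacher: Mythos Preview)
Your proposal is correct and essentially the same as the paper's proof: both take the output qubit of an optimal bounded-error algorithm and show the resulting states on $0$- and $1$-inputs have trace distance at least $1/3$. The only cosmetic difference is that the paper first applies a completely dephasing channel to make $\rho_x$ diagonal and then computes $\tr{\rho_x-\rho_y}=|p_x-p_y|$ directly, whereas you invoke the variational formula for trace distance with the projector $|0\rangle\langle 0|$; these amount to the same computation.
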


\begin{proof}
Let $\Q(f)=k$ and consider the $k$-query algorithm that witnesses this fact. 
Let $p_x$ be the probability that this $k$-query algorithm, when run on input $x$, outputs $1$ upon measuring the first qubit.
Since the algorithm computes $f$ with bounded error, we know that for all $1$-inputs $x$, $p_x\geq 2/3$, and for all $0$-inputs $y$, $p_y \leq 1/3$.

Now consider the single-qubit state $\rho_x$, which is obtained by taking the final state of this algorithm, tracing out all the qubits except the first one, and then applying a completely dephasing channel to it. 
This state is $\rho_x = \big( \begin{smallmatrix} 1-p_x & 0\\ 0&p_x \end{smallmatrix} \big)$.
Thus for all $x,y$ with $f(x)\neq f(x)$, $\tr{\rho_x - \rho_y} = |p_x - p_y| \geq 1/3$.
\end{proof}

As noted in the introduction, quantum distinguishing complexity is also lower bounded by the adversary bound, i.e., 
\begin{equation}
\QS(f) = \Omega(\Adv(f)).
\end{equation}
We do not prove this since this follows from the arguments that establish that the adversary bound is a lower bound on quantum query complexity~\cite{Amb02,Amb03,BSS03,LM04,Zha05,SS06}, since all these proofs only use the fact that the states output on $0$-inputs and $1$-inputs are nearly orthogonal.

Quantum distinguishing complexity is also superior to quantum certificate complexity $\QC(f)$, as we show in \prop{QSQC}. Quantum certificate complexity is a lower bound on quantum query complexity defined by Aaronson~\cite{Aar08}. It was later shown that quantum certificate complexity also lower bounds approximate polynomial degree~\cite{KT13}. 

Before proving \prop{QSQC}, we first define certificate complexity, randomized certificate complexity, and quantum certificate complexity.

\begin{definition}[Certificate complexity]
\label{def:cert}
For any (partial) function $f$ and input $x\in \Dom(f)$, consider the partial function $f^x$ defined on the domain $\{x\} \cup \{y\in\Dom(f):f(y)\neq f(x)\}$ that satisfies $f^x(x)=1$ and $f^x(y)=0$ for all $y\in\Dom(f)$ with $f(y) \neq f(x)$. 

We define 
the certificate complexity of $f$, denoted $\C(f)$, 
the randomized certificate complexity of $f$, denoted $\RC(f)$, and the quantum certificate complexity of $f$, denoted $\QC(f)$, as follows:
\begin{equation}
\C(f) = \max_{x \in \Dom(f)} \D(f^x), \quad
\RC(f) = \max_{x \in \Dom(f)} \R(f^x), \quad \mathrm{and} \quad
\QC(f) = \max_{x \in \Dom(f)} \Q(f^x).
\end{equation}
\end{definition}

The problem $f^x$ is clearly no harder than computing $f$ itself in any model of computation, and hence these are lower bounds on their respective measures, i.e., $\C(f) \leq \D(f)$, $\RC(f) \leq \R(f)$, and $\QC(f) \leq\Q(f)$. 
We can now prove that $\QS(f)$ is a better lower bound on $\Q(f)$ than $\QC(f)$.
\begin{proposition}
\label{prop:QSQC}
For all (partial) Boolean functions $f$, $\QS(f) = \Omega(\QC(f))$.
\end{proposition}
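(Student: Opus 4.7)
The plan is to show that for every $x_0 \in \Dom(f)$, we have $\Q(f^{x_0}) = O(\QS(f))$, which immediately gives $\QC(f) = \max_{x_0} \Q(f^{x_0}) = O(\QS(f))$. Conceptually, computing $f^{x_0}$ is nothing more than distinguishing the single known input $x_0$ from every $y$ with $f(y)\neq f(x_0)$, so I will turn a distinguishing algorithm for $f$ into a bounded-error quantum algorithm for $f^{x_0}$ with only constant-factor overhead in queries.

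First I would make the standard purification observation: as remarked after \defn{QS}, we may assume the $\QS(f)$-query distinguishing algorithm $A$ outputs a pure state $|\psi_x\>$ on input $x$, and by \eq{tr} the condition $\tr{|\psi_x\>\<\psi_x| - |\psi_y\>\<\psi_y|} \geq 1/6$ is equivalent to $|\<\psi_x|\psi_y\>|^2 \leq 1-1/36$ whenever $f(x)\neq f(y)$. Fix now an $x_0$ attaining $\QC(f) = \Q(f^{x_0})$. Since $x_0$ is known and the unitaries defining $A$ are input-independent, we can simulate $A$ on $x_0$ without making any queries, producing as many copies of $|\psi_{x_0}\>$ as we like for free.

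The algorithm for $f^{x_0}$ is then: run $A$ on the unknown input $y$ a constant number $t$ of times (costing $t\cdot \QS(f)$ queries), then apply the two-outcome projective measurement $\{M, I-M\}$ with $M = (|\psi_{x_0}\>\<\psi_{x_0}|)^{\otimes t}$, accepting on outcome $M$. On $y=x_0$ the acceptance probability is exactly $1$. On any $y$ with $f(y)\neq f(x_0)$ we have $|\psi_y\>^{\otimes t}$ as input, so the acceptance probability is $|\<\psi_{x_0}|\psi_y\>|^{2t} \leq (1-1/36)^t$, which is below $1/3$ for a suitable constant $t$. This yields a bounded-error quantum algorithm for $f^{x_0}$ using $O(\QS(f))$ queries, proving $\Q(f^{x_0}) = O(\QS(f))$ and hence the proposition.

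I do not expect any serious obstacle: the two ingredients are routine (purification and the pure-state trace-distance / inner-product identity). The only slightly subtle point is noting that the measurement $M$ can be implemented with zero queries because $x_0$ is fixed and known, so amplifying a distinguishing algorithm via tensor power projection onto the known reference state costs us only a constant factor in the query count.
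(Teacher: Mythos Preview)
Your proposal is correct and follows essentially the same approach as the paper: purify the distinguishing algorithm, project onto the known state $|\psi_{x_0}\>$, and amplify the constant gap $|\<\psi_{x_0}|\psi_y\>|^2 \le 1-(1/6)^2$ by constant-fold repetition. The paper phrases the amplification as ``repeat the single-copy test a constant number of times'' rather than your equivalent tensor-power measurement, but the argument is the same.
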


\begin{proof}
Let $\QS(f) = k$ and consider the $k$-query quantum algorithm that witnesses this fact. 
We can use this  algorithm to solve $f^x$ for any $x\in \Dom(f)$.
Consider the output of the algorithm on input $x$ before the partial trace operation and call this $|\psi_x\>$. 
The trace distance between $|\psi_x\>$ and $|\psi_y\>$ for $y\in \Dom(f)$ with $f(y) \neq f(x)$ is at least $1/6$ since trace distance is non-increasing under partial trace~\cite[Th.~9.2]{NC00}.

Now we construct an algorithm for $f^x$ from this algorithm to show that $\Q(f^x) = O(\QS(f))$. To do so, we run the supposed algorithm and measure whether the output state is $|\psi_x\>$ or not and accept only when the measurement accepts. This yields an algorithm that outputs $1$ on $x$ with probability $1$ and accepts inputs $y$ with $f(x) \neq f(x)$ with some constant probability strictly less than $1$. 
More precisely, the acceptance probability is $|\<\psi_x|\psi_y\>|^2 \leq 1-(1/6)^2$ due to the relationship between inner product and trace distance for pure states. 
Repeating this algorithm a constant number of times yields a bounded-error quantum algorithm for $f^x$.
\end{proof}

\subsection{Relation with randomized sabotage complexity}

We start by reviewing the definition of randomized sabotage complexity,
as presented in \cite{BK16}.
Fix a (partial) Boolean function $f:D\to\B$ with $D\in\B^n$.
For any pair $x,y\in\Dom(f)$ such that $f(x)\ne f(y)$, let $p\in\Ba^n$
be the partial assignment of all bits where $x$ and $y$ agree
(with the symbol $*$ used for the bits where $x$ and $y$ disagree).
We call $p$ a ``sabotaged input'', imagining that a saboteur
replaced bits of $x$ with $*$ symbols until it was no longer
possible to determine $f(x)$.

Let $S_*\subseteq\Ba^n$ be the set of all sabotaged inputs to $f$,
that is, the
set of all partial assignments that are consistent with both a $0$-input
and a $1$-input to $f$. Let $S_\dagger\in\Bo^n$ be the same as $S_*$,
except that the $\dagger$ symbol is used instead of the $*$ symbol.
Finally, let $f_{\sab}:S_*\cup S_\dagger\to\B$ be the function
that takes a sabotaged input and identifies whether it has $*$ symbols
or $\dagger$ symbols, promised that it contains only one type of symbol.
Intuitively, $f_\sab$ is a decision problem
that forces an algorithm computing it to find a $*$ or $\dagger$.
We then define $\RS(f):=\R_0(f_\sab)$, the expected running time
of a zero-error randomized algorithm computing $f_\sab$.

To show that $\RS(f)$ is larger than $\QS(f)$ for all $f$, we
will define a classical measure analogous to $\QS(f)$.
We will then show this measure is equivalent to $\RS(f)$.

\begin{definition}[Randomized distinguishing complexity]
Let $f:D\to\B$, where $D\subseteq\B^n$, be an $n$-bit partial function.
$\RD(f)$ is defined as the smallest integer $k$ such that
there exists a $k$-query randomized algorithm that on input $x\in D$
outputs a sample from a probability distribution $d_x$ such that
\begin{equation}\forall x,y\in D\mbox{ with } f(x)\ne f(y),\quad D_\mathrm{TV}(d_x,d_y)\ge 1/6,\end{equation}
where $D_\mathrm{TV}(\cdot,\cdot)$ stands for the total variation distance between probability distributions.
\end{definition}

Since quantum algorithms can simulate classical algorithms,
we immediately get that $\QS(f)\le\RD(f)$. Next, we will show
that $\RD(f)=\Theta(\RS(f))$, completing the argument that
$\QS(f)=O(\RS(f))$.

\begin{theorem}
Let $f$ be a partial Boolean function. Then
$\RS(f)/12\le\RD(f)\le(12/11)\RS(f)$.
\end{theorem}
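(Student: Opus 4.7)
The plan is to prove both inequalities by direct simulation, exploiting a coupling via shared randomness. The fundamental observation is this: if $p\in S_*$ arises as the agreement of two inputs $x,y\in\Dom(f)$ with $f(x)\ne f(y)$, then any randomized algorithm run on $x$, $y$, or $p$ with the same random coins $r$ produces identical transcripts up until the first query to a position where $x_i\ne y_i$ (equivalently, a $*$-position of $p$), since all non-$*$ queries receive identical responses in each case. In addition, any zero-error algorithm solving $f_\sab$ on $p\in S_*$ must query at least one $*$-position before halting: otherwise its execution on the corresponding input $q\in S_\dagger$ (obtained by flipping $*$'s to $\dagger$'s) would be identical, forcing an incorrect answer on one of the two.

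For the upper bound $\RD(f)\le(12/11)\RS(f)$, start with a zero-error algorithm $B$ for $f_\sab$ of expected cost $\RS(f)$, and build a distinguishing algorithm that on input $x$ simulates $B$ (answering each query with the true bit of $x$), truncates at $T=O(\RS(f))$ queries, and outputs the resulting transcript as its sample. Fix $x,y$ with $f(x)\ne f(y)$ and let $p\in S_*$ be their agreement. Since $B$ on $p$ must query a $*$-position before halting, Markov's inequality gives that this happens within $T$ queries with probability at least $1-\RS(f)/T$. Under the shared-coin coupling, this is also a lower bound on the probability that the truncated transcript of $B$ on $x$ contains a position where $x_i\ne y_i$, an event which has probability $0$ under $d_y$. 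Thus $D_\mathrm{TV}(d_x,d_y)\ge 1-\RS(f)/T$, and tuning $T$ yields the bound.

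For the lower bound $\RS(f)/12\le\RD(f)$, start with a distinguishing algorithm $A$ making $k=\RD(f)$ queries with $D_\mathrm{TV}(d_x,d_y)\ge 1/6$ whenever $f(x)\ne f(y)$. Solve $f_\sab$ on input $p$ by repeatedly running $A$ with fresh coins, answering each query at position $i$ with $p_i$ if $p_i\in\B$ and halting with output $p_i$ if $p_i\in\{*,\dagger\}$. For any $p\in S_*$ witnessed by $x,y$, the shared-coin coupling shows that the probability a single run queries a $*$-position is at least $D_\mathrm{TV}(d_x,d_y)\ge 1/6$: if this were smaller, the probability of transcript disagreement between $A$ on $x$ and $A$ on $y$ under the shared-coin coupling would violate the TV lower bound. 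A geometric-series argument then bounds the expected total query cost by $O(k)$. The main obstacle is hitting the precise constants $12/11$ and $12$: the raw Markov/geometric estimates above give slightly weaker constants (roughly $6/5$ and $6$), and pinning down the stated values, which are tuned to the $1/6$ threshold in the definition of $\RD$, requires a more delicate tail analysis or a light amplification step.
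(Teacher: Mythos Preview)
Your approach is essentially identical to the paper's: both directions rely on the shared-randomness coupling between runs on $x$, $y$, and the sabotaged input, with Markov's inequality in one direction and geometric repetition in the other. The paper likewise has the distinguishing algorithm output the queried partial assignment (your ``transcript''), and conversely solves $f_{\sab}$ by repeating the distinguisher until a $*$/$\dagger$ is hit.

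Your worry about the constants is misplaced. For the inequality $\RS(f)\le c\cdot\RD(f)$, your coupling bound $\Pr[\text{hit a }*]\ge D_{\mathrm{TV}}(d_x,d_y)\ge 1/6$ already gives $c=6$, which is \emph{stronger} than the paper's $c=12$; the paper loses an unnecessary factor of $2$ by bounding the total variation distance by $2(1-p)$ rather than $1-p$, where $1-p$ is the probability of querying a $*$. For the inequality $\RD(f)\le c'\cdot\RS(f)$, your one-sided Markov estimate $D_{\mathrm{TV}}(d_x,d_y)\ge 1-\RS(f)/T$ gives $c'=6/5$. The paper attempts to reach $c'=12/11$ by taking $T=(12/11)\RS(f)$, observing that then both $d_x$ and $d_y$ place mass at least $1/12$ on symbols outside the other's support, and concluding $D_{\mathrm{TV}}\ge 1/6$. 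But that last inference is not valid in general: for instance, $d_x=\tfrac{1}{12}\delta_a+\tfrac{11}{12}\delta_c$ and $d_y=\tfrac{1}{12}\delta_b+\tfrac{11}{12}\delta_c$ satisfy both premises yet have $D_{\mathrm{TV}}=1/12$. So the paper's argument, carried out correctly, also only yields $6/5$. No ``more delicate tail analysis or light amplification step'' is needed; your bounds $\RS(f)/6\le\RD(f)\le(6/5)\RS(f)$ are the honest constants this method delivers, and they imply the (weaker) lower-bound constant in the stated theorem while matching what the paper's own proof actually establishes for the upper bound.
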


\begin{proof}
First, we show that $\RS(f)\le 12\RD(f)$. Let $A$ be an optimal randomized
algorithm for $\RD(f)$, that on input $x$ outputs a sample from
the distribution $d_x$. Let $z\in\Dom(f_\sab)$ be a sabotaged
input, and consider running $A$ on $z$. Since $z$ is sabotaged,
there are inputs $x$ and $y$ with $f(x)\ne f(y)$ that are both
consistent with the non-$*$, non-$\dagger$ bits of $z$.
The variation distance between $d_x$ and $d_y$ is at least $1/6$.

A randomized algorithm can be viewed as a probability distribution
over deterministic algorithms. Split the support of the distribution
for $A$ into two parts:
a set $S$ consisting of deterministic algorithms that,
when run on $z$, query a $*$ or $\dagger$, and a set $T$ consisting
of deterministic algorithms that don't query a $*$ or $\dagger$
when run on $z$. Note that algorithms in $T$ behave the same
on $x$ and $y$. If $A$ samples an algorithm from $T$ with
probability $p$, the total variation distance between the run of $A$
on $x$ and the run of $A$ on $y$ must therefore be at most $2(1-p)$.
Since this is at least $1/6$, we have $p\le 11/12$. Hence
when $A$ is run on $z$, it queries a $*$ or $\dagger$
with probability at least $1/12$.

If we repeat $A$ whenever it does not query a $*$ or $\dagger$,
we get an algorithm that always finds such an entry
and uses at most $12\RD(f)$ queries on expectation. This
is a zero-error randomized algorithm for $f_\sab$,
so $\RS(f)\le 12\RD(f)$.

We now handle the other direction, showing $\RD(f)\le(12/11)\RS(f)$.
Let $A$ be an optimal zero-error randomized algorithm for $f_\sab$.
It makes $\RS(f)$ queries on expectation, and always finds a $*$ or
$\dagger$ in any sabotaged input. Consider the algorithm
$B$ that, on input $x\in\Dom(f)$,
runs $A$ for at most $2\RS(f)$ queries and outputs
the partial assignment it queried (that is, it outputs
all the pairs $(i,x_i)$ that were queried by the algorithm $A$).

Let $x$ and $y$ be inputs to $f$ with $f(x)\ne f(y)$.
Let $z$ be the sabotaged input defined by $x$ and $y$,
that is, $z_i=*$ if $x_i\ne y_i$ and $z_i=x_i=y_i$ otherwise.
By Markov's inequality, after $(12/11)\RS(f)$ queries, $A$ finds
a $*$ with probability at least $1/12$
when it is run on $z$. This means that when $A$ is run on $x$,
it queries an index $i$ for which $x_i\ne y_i$
with probability at least $1/12$. When this happens,
the output of $B(x)$ is not in the support of $d_y$.
This means $d_x$ puts weight at least $1/12$ on symbols not in
the support of $d_y$. Conversely, $d_y$ puts weight at least $1/12$
on symbols not in the support of $d_x$. The total variation distance
between the two distributions is therefore at least $1/6$,
meaning $B$ is a valid $\RD(f)$ algorithm. We conclude
that $\RD(f)\le(12/11)\RS(f)$.
\end{proof}

Combined with $\QS(f)\le\RD(f)$, this theorem gives us
the following corollary.

\begin{corollary}
For all (partial) Boolean functions $f$,
$\QS(f)=O(\RS(f))$.
\end{corollary}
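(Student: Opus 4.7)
The plan is to observe that this corollary is essentially immediate from the theorem established just above together with the relation $\QS(f)\le\RD(f)$ remarked on in the paragraph preceding the theorem. So the real content is to justify $\QS(f)\le\RD(f)$ cleanly, and then chain the inequalities.

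First I would justify $\QS(f)\le\RD(f)$ in detail. Given a $k$-query randomized distinguishing algorithm that, on input $x$, outputs a sample from a distribution $d_x$ with $D_\mathrm{TV}(d_x,d_y)\ge 1/6$ whenever $f(x)\ne f(y)$, I would construct a $k$-query quantum distinguishing algorithm as follows. A randomized algorithm is a distribution over deterministic query algorithms; by placing the internal coin flips in a separate register and using standard reversible simulation, we can run the entire procedure coherently, obtaining a pure state whose measurement in the computational basis yields a sample from $d_x$. Tracing out everything except the output register and dephasing in the computational basis produces the classical mixed state $\rho_x=\sum_z d_x(z)|z\rangle\!\langle z|$. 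For such diagonal states we have the well-known identity $\tr{\rho_x-\rho_y}=D_\mathrm{TV}(d_x,d_y)$, so whenever $f(x)\ne f(y)$ we get $\tr{\rho_x-\rho_y}\ge 1/6$, as required by \defn{QS}. This uses exactly $k$ queries, giving $\QS(f)\le\RD(f)$.

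Next I would simply compose inequalities: the theorem proved just above states $\RD(f)\le(12/11)\RS(f)$, so
\begin{equation}
\QS(f)\le\RD(f)\le(12/11)\RS(f)=O(\RS(f)),
\end{equation}
which is the claimed bound.

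I do not expect any real obstacle; the only point requiring mild care is the coherent simulation of a randomized algorithm and the observation that trace distance of diagonal density matrices coincides with total variation distance of the underlying distributions, and both of these are textbook facts. Everything else is just stringing together the bounds already in place.
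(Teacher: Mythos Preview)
Your proposal is correct and follows exactly the paper's approach: the corollary is stated as an immediate consequence of $\QS(f)\le\RD(f)$ together with the theorem's bound $\RD(f)\le(12/11)\RS(f)$. You have simply spelled out the justification of $\QS(f)\le\RD(f)$ more carefully than the paper (which just remarks that quantum algorithms can simulate classical ones), but the route is identical.
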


\section{Fifth power query relation}
\label{sec:zero}

In this section we prove a new relationship between zero-error quantum query complexity and quantum distinguishing complexity and bounded-error quantum query complexity, restated below.

\QzQS*

Our proof uses ideas from an analogous classical result~\cite{Mid05,KT13} and the main quantum ingredient used is the hybrid argument of Bennett, Bernstein, Brassard, and  Vazirani~\cite{BBBV97}.
We now describe and prove a version of the hybrid argument that we use.

\subsection{Hybrid argument}

We start by defining the concept of a \emph{sensitive block}. 
For a string $x\in \B^n$ and a subset of input bits $B\subseteq [n]$, which we call a block, we use $x^B$ to denote the input with all bits in $B$ flipped. 
In other words, $x^B$ agrees with $x$ on all positions outside $B$ and disagrees on $B$. For a function $f$ and an input $x\in\Dom(f)$, we say a block $B$ is a sensitive block if $f(x) \neq f(x^B)$.  

Now any algorithm that computes $f$ must also be able to distinguish $x$ from $x^B$, where $B$ is a sensitive block. 
Any algorithm that can distinguish $x$ from $x^B$ must ``look at'' the bits in $B$ in some informal sense.
For classical algorithms, this simply means the algorithm has to query a bit from $B$ with high probability.
The analogous statement for quantum algorithms is not so clear, since quantum algorithms can query all input bits in superposition.
Nevertheless, the hybrid argument still allows us to formalize this intuition in the quantum setting. 
The hybrid argument asserts that the total weight of queries within the sensitive block (i.e., the total sum of probabilities of querying within the sensitive block over the course of the algorithm) cannot be too small~\cite{BBBV97}:

\begin{restatable}[Hybrid Argument]{lemma}{hybrid}
\label{lem:hybrid}
Let $x\in\B^n$ be an input, and let $B\subseteq[n]$
be a block.
Let $Q$ be a $T$-query quantum algorithm
that accepts $x$ and rejects $x^B$ with high probability, or more generally
produces output states that are a constant distance apart in trace
distance for $x$ and $x^B$.

Let $m_i^t$ be the probability that, when $Q$ is run on
$x$ for $t$ queries and then subsequently measured,
it is found to be querying position $i$ of $x$ (i.e., the query register collapses to $|i\>$).
Then
\begin{equation}\sum_{t=1}^T\sum_{i\in B} m_i^t
    =\Omega\left(\frac{1}{T}\right).\end{equation}
\end{restatable}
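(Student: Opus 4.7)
The plan is to execute the standard BBBV hybrid argument. Denote by $|\phi_t(x)\rangle$ the pure state of $Q$ after $t$ queries on input $x$ (including any workspace that the algorithm will trace out at the end), and write the state just before the $t$-th oracle call as $|\psi_{t-1}(x)\rangle = \sum_{i,b,w}\alpha^{t-1}_{i,b,w}|i,b,w\rangle$, so that $m_i^t = \sum_{b,w}|\alpha^{t-1}_{i,b,w}|^2$ (the query register is untouched by $O_x$, so measuring before or after the $t$-th oracle gives the same distribution on $i$). Since partial trace does not increase trace distance, the hypothesis lifts to $\tr{|\phi_T(x)\rangle\langle\phi_T(x)|-|\phi_T(x^B)\rangle\langle\phi_T(x^B)|}=\Omega(1)$. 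Combined with \eq{tr} and the elementary identity $\||\phi\rangle-|\phi'\rangle\|^2 = 2-2\,\mathrm{Re}\langle\phi|\phi'\rangle \ge 2-2|\langle\phi|\phi'\rangle|$, this yields the crucial lower bound $\||\phi_T(x)\rangle-|\phi_T(x^B)\rangle\| = \Omega(1)$, valid regardless of the relative global phase between the two algorithm outputs.

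The heart of the argument is the per-step identity
\begin{equation}
\|(O_x-O_{x^B})|\psi_{t-1}(x)\rangle\|^2 \;=\; 2\sum_{i\in B}m_i^t,
\end{equation}
which holds because $O_x$ and $O_{x^B}$ agree on basis vectors with $i\notin B$, while for $i\in B$ they send $|i,b,w\rangle$ to $|i,b\oplus x_i,w\rangle$ and $|i,b\oplus x^B_i,w\rangle$, which are orthogonal since $x_i\ne x^B_i$; each such pair contributes $2|\alpha^{t-1}_{i,b,w}|^2$ to the squared norm of the difference. Now I telescope: define hybrid states $|\chi_t\rangle$ by running the algorithm with $O_x$ for the first $t$ queries and $O_{x^B}$ for the remaining $T-t$ queries, keeping all non-oracle unitaries fixed. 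Then $|\chi_0\rangle=|\phi_T(x^B)\rangle$, $|\chi_T\rangle=|\phi_T(x)\rangle$, and unitarity of the trailing operations gives $\||\chi_t\rangle-|\chi_{t-1}\rangle\|=\|(O_x-O_{x^B})|\psi_{t-1}(x)\rangle\|=\sqrt{2\sum_{i\in B}m_i^t}$. The triangle inequality followed by Cauchy--Schwarz then yields
\begin{equation}
\||\phi_T(x)\rangle-|\phi_T(x^B)\rangle\| \;\le\; \sum_{t=1}^T\sqrt{2\sum_{i\in B}m_i^t} \;\le\; \sqrt{2T\sum_{t=1}^T\sum_{i\in B}m_i^t},
\end{equation}
and comparing with the $\Omega(1)$ lower bound from paragraph one delivers $\sum_{t=1}^T\sum_{i\in B}m_i^t = \Omega(1/T)$, as required.

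The remaining concerns are routine bookkeeping. The hybrid must be linearized around the run on $x$ (not $x^B$), so that the $m_i^t$ appearing on the right-hand side are exactly those of the algorithm on $x$, matching the lemma's statement; this is why I chose $|\psi_{t-1}(x)\rangle$ in the per-step bound. The trace-to-Euclidean conversion for possibly mixed outputs has been handled via purification and partial-trace monotonicity. I do not expect a serious obstacle here: all of the genuine content of the lemma is packaged into the single-step bound on $\|(O_x-O_{x^B})|\psi_{t-1}(x)\rangle\|$, and the rest is triangle inequality plus Cauchy--Schwarz.
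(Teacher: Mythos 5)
Your overall approach is the standard BBBV hybrid argument and matches the paper's proof in structure: convert the trace-distance hypothesis on the outputs into an $\ell_2$-distance lower bound via \eq{tr}, bound the distance between consecutive hybrid states by the query mass in $B$, then telescope and apply Cauchy--Schwarz. (The paper phrases the telescope via $d_t=\||\psi^t_x\rangle-|\psi^t_{x^B}\rangle\|$ rather than via interpolating hybrids $|\chi_t\rangle$, but the two are the same argument.)

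However, your per-step claim
\begin{equation*}
\|(O_x-O_{x^B})|\psi_{t-1}(x)\rangle\|^2 \;=\; 2\sum_{i\in B}m_i^t
\end{equation*}
is \emph{not} an identity and is in fact false. The reasoning ``each such pair contributes $2|\alpha^{t-1}_{i,b,w}|^2$ to the squared norm of the difference'' implicitly treats the images of distinct basis vectors as mutually orthogonal, but with the bit-flip oracle $O_x|i,b,w\rangle=|i,b\oplus x_i,w\rangle$ the two vectors $(O_x-O_{x^B})|i,0,w\rangle$ and $(O_x-O_{x^B})|i,1,w\rangle$ (for $i\in B$) are \emph{negatives} of each other, not orthogonal. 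Grouping by $(i,w)$ one actually gets
\begin{equation*}
\|(O_x-O_{x^B})|\psi\rangle\|^2 \;=\; 2\sum_{i\in B}\sum_w \left|\alpha_{i,0,w}-\alpha_{i,1,w}\right|^2,
\end{equation*}
which can be anywhere from $0$ (when $\alpha_{i,0,w}=\alpha_{i,1,w}$) up to $4\sum_{i\in B}m_i^t$ (when $\alpha_{i,0,w}=-\alpha_{i,1,w}$); in particular it can exceed your claimed value of $2\sum_{i\in B}m_i^t$, so you cannot use that quantity as an upper bound in the telescope. The correct per-step statement is the inequality $\|(O_x-O_{x^B})|\psi\rangle\|\le 2\sqrt{\sum_{i\in B}m_i^t}$, which the paper obtains cleanly by restricting to the $B$-component $|\psi^{x,B}\rangle$ of $|\psi\rangle$ and applying the operator triangle inequality $\|(O_x-O_{x^B})|\psi^{x,B}\rangle\|\le\|O_x|\psi^{x,B}\rangle\|+\|O_{x^B}|\psi^{x,B}\rangle\|=2\||\psi^{x,B}\rangle\|$. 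This fix only changes a constant (your $\sqrt{2}$ becomes $2$), so the final conclusion $\sum_{t}\sum_{i\in B}m_i^t=\Omega(1/T)$ still goes through unchanged; but as written, the central step of your proof is a false equality, and you should replace it with the inequality and drop the claim of exactness.
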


Note that for a randomized algorithm, we would have $\Omega(1)$ on the right-hand side instead of $\Omega(1/T)$, since a randomized algorithm must look within $B$ (with high probability) at some point during its execution.
This lemma was implicitly proven in \cite{BBBV97}.
We reproduce the proof in \app{hybrid} for the reader's convenience.

\subsection{New upper bound}

To prove our result we also need to upper bound the number of minimal sensitive blocks of a function. It is not too hard to show that any minimal sensitive block has size at most the sensitivity of $f$, $s(f)$, which is the maximum number of sensitive blocks of size $1$ over all inputs $x$. 
Since there are at most $\binom{n}{\s(f)} = O(n^{\s(f)})$ different subsets of $n$ positions of size $\s(f)$, we know that the number of minimal sensitive blocks is at most this quantity. 
Kulkarni and Tal \cite{KT13} improve this simple upper bound replacing $n$ with randomized certificate complexity $\RC(f)$ (\defn{cert}). 

\begin{lemma}\label{lem:KT}
For any total function $f:\B^n\to\B$
and any input $x\in\B^n$, the number of minimal
sensitive blocks of $x$ with respect to $f$ is at most
$O(\RC(f)^{\s(f)})$.
\end{lemma}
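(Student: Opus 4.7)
The plan is to combine two ingredients---a size bound on minimal sensitive blocks, and the LP-dual fractional certificate for $\RC(f)$---glued together by induction on the block-size budget. First, I would show that every minimal sensitive block $B$ at $x$ has $|B|\leq \s(f)$: by minimality, $f(x^{B\setminus\{i\}})=f(x)\neq f(x^B)$ for each $i\in B$, so every coordinate of $B$ is sensitive at $x^B$, giving $|B|\leq \s(f,x^B)\leq\s(f)$. Second, by the standard LP-duality characterization of $\RC$, there exist nonnegative weights $w_1,\ldots,w_n$ with $\sum_i w_i = O(\RC(f))$ such that $\sum_{i:x_i\neq y_i} w_i\geq 1$ for every $y\in\Dom(f)$ with $f(y)\neq f(x)$; applied to $y=x^B$ for any sensitive block $B$, this yields $\sum_{i\in B} w_i\geq 1$.

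Let $N_k(x)$ denote the number of minimal sensitive blocks of $x$ of size at most $k$. Double counting and swapping the order of summation,
\begin{equation*}
N_k(x) \leq \sum_B \sum_{i\in B} w_i = \sum_i w_i\cdot|\{B : i\in B\}|,
\end{equation*}
so there exists an index $i^*$ with $|\{B : i^*\in B\}|\geq N_k(x)/O(\RC(f))$. I would then relate these blocks to minimal sensitive blocks at the flipped input $y:=x^{\{i^*\}}$ via the injective map $B\mapsto B':=B\setminus\{i^*\}$. If $f(y)=f(x)$, the identity $y^{B'}=x^B$ shows that $B'$ is sensitive at $y$, and minimality of $B$ at $x$ translates to minimality of $B'$ at $y$ (any proper subset $B''\subsetneq B'$ would give $B''\cup\{i^*\}\subsetneq B$, contradicting minimality); hence $|\{B:i^*\in B\}|\leq N_{k-1}(y)$. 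If instead $f(y)\neq f(x)$, then $\{i^*\}$ is itself sensitive, forcing $B=\{i^*\}$ by minimality, so $|\{B:i^*\in B\}|\leq 1$.

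Combining both steps yields the recursion $N_k(x)\leq O(\RC(f))\cdot(1+N_{k-1}(y))$. Unrolling from $k=\s(f)$ down to the base case $N_0\equiv 0$, the geometric series telescopes to $N_{\s(f)}(x) = O(\RC(f))^{\s(f)}$, which (together with Step 1) matches the claimed bound on the total number of minimal sensitive blocks.

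The main obstacle is the bookkeeping in the case analysis of Step 4: correctly translating minimality at $x$ to minimality at $y$ for $B'$, and verifying that the fractional-certificate inequality indeed applies to $y=x^B$ (it does, provided $x^B\in\Dom(f)$, which holds automatically when $B$ is a sensitive block of $x$ and $f$ is total). A minor subtlety is tracking constants over the $\s(f)$ levels of recursion, but since $\s(f)\leq\RC(f)$, any polynomial-in-$\s(f)$ prefactor can be absorbed into $O(\RC(f)^{\s(f)})$.
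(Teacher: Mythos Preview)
The paper does not give its own proof of this lemma; it is quoted from Kulkarni and Tal~\cite{KT13} and used as a black box in the proof of \thm{QzQS}. Your argument is essentially the Kulkarni--Tal proof: the fractional-certificate LP dual at $x$ gives weights with $\sum_i w_i = O(\RC(f))$ and total weight $\ge 1$ on every sensitive block, pigeonhole locates a coordinate $i^*$ contained in at least a $1/O(\RC(f))$ fraction of the minimal sensitive blocks, and deleting $i^*$ injectively maps those blocks to minimal sensitive blocks of strictly smaller size at $x^{\{i^*\}}$ (or collapses them all to $\{i^*\}$ in the degenerate case $f(x^{\{i^*\}})\ne f(x)$), which drives the induction down from $k=\s(f)$.

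One small correction to your final paragraph: the prefactor accumulated over the recursion is not ``polynomial in $\s(f)$'' but $c^{\s(f)}$, where $c$ is the constant hidden in $O(\RC(f))$. Your stated justification ``$\s(f)\le\RC(f)$'' would not absorb an exponential-in-$\s(f)$ factor into $O(\RC(f)^{\s(f)})$ under a strict reading of that expression. However, the intended reading of the lemma's bound is $(O(\RC(f)))^{\s(f)}=(c\cdot\RC(f))^{\s(f)}$, which is precisely what your recursion $N_k(x)\le c\cdot\RC(f)\cdot(1+\max_z N_{k-1}(z))$ delivers after unrolling, so the proof is correct as written once this is clarified.
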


We are now ready to prove \thm{QzQS}.

\begin{proof}[Proof of \protect{\thm{QzQS}}.]
Let $Q$ be the optimal quantum distinguishing algorithm for $f$,
that uses $T=\QS(f)$ queries.
Consider running the following quantum algorithm $P$ on oracle input $x\in \B^n$:

\begin{enumerate}
\item Pick $t\in[T]$ uniformly at random.
\item Run $Q$ on $x$ for $t$ queries and measure the query register.
\item Write down (on a classical tape) the position $i$ where $Q$ is found
to be querying, as well as the query output $x_i$.
\end{enumerate}

The algorithm $P$ uses $t \leq T$ quantum queries.
Now that the probability $P$ wrote down the index $i$
is $(1/T)\sum_{t=1}^T m_i^t$. For any block
$B\subseteq[n]$, the probability that $P$ wrote down
some index in $B$ is 
\begin{equation}\frac{1}{T}\sum_{t=1}^T\sum_{i\in B}m_i^t.\end{equation}
If $B$ is a sensitive block for the input $x$,
then the hybrid argument (\lem{hybrid}) implies the probability that our new algorithm
$P$ outputs an index in $B$ is $\Omega(1/T^2)$.

Next, we repeat the algorithm $P$ several times.
We claim that after $O(T^2\s(f)\log\RC(f))$
repetitions, the outputs of $P$ constitute a certificate
for $x$ with constant probability.

To see this, note that for any minimal sensitive block $B$ of the input $x$,
the probability that some run of $P$ (out of the $O(T^2\s(f)\log\RC(f))$ many runs) queries
in the block $B$ is $1-O(\RC(f)^{-\s(f)})$. 
This is because $T^2$ repetitions boost the probability of querying in a minimal sensitive block from $\Omega(1/T^2)$ to $\Omega(1)$, and then $\s(f)\log\RC(f)$ repetitions of this boosted algorithm further boost the probability to the claimed bound.
Hence, by \lem{KT} and the union bound,
there is a constant
probability that these runs of $P$ query a bit
in every minimal sensitive block of the input $x$.
But a set of bits that intersects every sensitive
block of $x$ is a certificate for $x$.
Thus these runs of $P$ output a certificate for the
input $x$ with constant probability.

Any algorithm that finds a certificate with constant
probability can be turned into a zero-error algorithm
by repeating whenever a certificate is not found.
We therefore get a zero-error algorithm that works
simply by repeating $P$ a sufficient number of times.
Note that $P$ uses $O(T)$ quantum queries and
must be repeated $O(T^2\s(f)\log\RC(f))$ times.
Recalling that $T=\QS(f)$, we get
\begin{equation}\Q_0(f)=O(\QS(f)^3\s(f)\log\RC(f)).\end{equation}
We can simplify this to $\Q_0(f)=O(\QS(f)^5\log\QS(f))$,
since $\s(f) = O(\RC(f)) = O(\QC(f)^2)$~\cite{Aar08} and $\QC(f) = O(\QS(f))$ (\prop{QSQC}).
\end{proof}

\section{Quantum statistical zero knowledge}
\label{sec:QSZK}

\subsection{History}

The subject of statistical zero-knowledge proof systems has a rich history in the classical setting, and the interested reader is referred to the paper of Sahai and Vadhan~\cite{SV03}.
Informally, the complexity class $\cl{SZK}$ contains problems that can be solved by a probabilistic polynomial-time verifier interacting with a computationally unbounded prover (like the class $\cl{IP}$) with the additional restriction that the verifier not learn anything from the prover (statistically) other than the answer to the problem. 
From this it is clear that $\mathsf{BPP} \subseteq \mathsf{SZK}$, since the verifier can simply not interact with the prover, and $\mathsf{SZK} \subseteq \cl{IP}$, since $\cl{IP}$ is simply $\cl{SZK}$ without the zero-knowledge constraint. 

More surprisingly, it is also known that $\cl{SZK} = \cl{coSZK}$, and that we can assume without loss of generality that the interaction is only one round and uses public randomness, which means $\cl{SZK} \subseteq \cl{AM} \cap \cl{coAM}$. 
Another interesting subtlety is that $\cl{SZK}$ can be defined assuming an honest verifier, one who does not deviate from the protocol to learn more, or a cheating verifier, who may deviate from the protocol. 
It turns out that these definitions lead to the same complexity class~\cite{GSV98}.
The class $\cl{SZK}$ also has a much simpler characterization in terms of a complete problem called \emph{statistical difference}, as shown by Sahai and Vadhan~\cite{SV03}, which yields easier proofs of some of these facts. 
Informally, in the statistical difference problem we are given two circuits that sample from probability distributions, and the task is determine whether the distributions are far or close in total variation distance.

On the quantum side, (honest-verifier) $\cl{QSZK}$ was first defined by Watrous~\cite{Wat02}, and like the classical case, it satisfies $\cl{BQP} \subseteq \cl{QSZK} \subseteq \cl{QIP}$.
The same paper strengthened these obvious containments by showing that $\cl{QSZK}$ is closed under complement (i.e., $\cl{QSZK} = \cl{coQSZK}$) and that the protocol can be assumed to be one round, which gives $\cl{QSZK} \subseteq \cl{QIP(2)}$.
Watrous also showed that $\cl{QSZK}$ has a complete problem, called \emph{quantum state distinguishability}, which is a quantum generalization of the statistical difference problem of Sahai and Vadhan. 
In this problem, we are given two quantum circuits outputting mixed states and have to decide if the states are far apart or close in trace distance.
Later, Watrous~\cite{Wat09} also showed that honest-verifier $\cl{QSZK}$ and cheating-verifier $\cl{QSZK}$ are the same, as in the classical case.

\subsection{Definition}

We now define a query  analogue of quantum statistical zero-knowledge. 
Instead of defining $\QSZK(f)$ in terms of an interactive zero-knowledge protocol for $f$, we use the complete problem characterization by Watrous. 
This yields a considerably simpler definition of $\cl{QSZK}$ in the query setting.\footnote{The complete problem is often used to define SZK (and its variants, like NISZK) in query complexity and communication complexity (for example, see \cite{BCHTV17,Sub17}). It is not obvious whether the definition via an interactive proof and the definition via the complete problem coincide exactly as the problem is complete under polynomial-time reductions, which may add polynomial overhead.}

\begin{definition}[QSZK]
\label{def:QSZK}
Let $f:D\to \B$, where $D \subseteq \B^n$, be an $n$-bit partial function. $\QSZK(f)$ is defined as the smallest integer $k$ such that there exists two quantum query algorithms making $k$ queries in total that on input $x\in D$ output states $\rho_x$ and $\sigma_x$ of the same size such that 
\begin{itemize}
    \item $\forall x\in D$ with $f(x)=1, \enspace \tr{\rho_x - \sigma_x} \geq 2/3$,
    \item $\forall x\in D$ with $f(x)=0, \enspace \tr{\rho_x - \sigma_x} \leq 1/3$. 
\end{itemize}

\end{definition}

This definition is also robust to some changes. In particular, the constants $2/3$ and $1/3$ can be replaced by any constants $\alpha \in [0,1]$ and $\beta \in [0,1]$ as long as $\alpha^2 > \beta$.
Hence an alternate definition with $0.99$ instead of $2/3$ and $0.01$ instead of $1/3$ leads to the same complexity measure up to multiplicative constants.
This follows from the analogous property of the complexity class $\cl{QSZK}$, which was shown by Watrous~\cite{Wat02} (see Theorem 1 in the conference version or Theorem 5 in the full version for more details).

\subsection{Properties}

As a sanity check, let us prove the query analog of the obvious containment $\cl{BQP} \subseteq \cl{QSZK}$.

\begin{proposition}
\label{prop:QSZKQ}
For all (partial) Boolean functions $f$, $\QSZK(f) \leq \Q(f)$.
\end{proposition}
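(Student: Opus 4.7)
The plan is to mimic the classical containment $\mathsf{BQP} \subseteq \mathsf{QSZK}$ at the query level by turning a bounded-error algorithm for $f$ into a pair of ``distinguishability'' algorithms, one of which is trivial. Concretely, I would take the optimal $\Q(f)$-query bounded-error algorithm $A$ for $f$ and let algorithm~1 (making all $\Q(f)$ queries) be the one that runs $A$, measures the output qubit in the standard basis, and writes the resulting classical bit into a fresh qubit; let algorithm~2 (making $0$ queries) simply output the fixed state $|0\>\<0|$. Then the states $\rho_x$ and $\sigma_x$ are single-qubit states of the same size, and the total number of queries used is $\Q(f) + 0 = \Q(f)$, as required by \defn{QSZK}.

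The second step is a two-line trace distance calculation. Writing $p_x$ for the probability that $A$ accepts $x$, we have $\rho_x = (1-p_x)|0\>\<0| + p_x|1\>\<1|$ and $\sigma_x = |0\>\<0|$, so
\begin{equation}
\tr{\rho_x - \sigma_x} = \tfrac{1}{2}\bigl\|{-p_x |0\>\<0| + p_x |1\>\<1|}\bigr\|_1 = p_x.
\end{equation}
The bounded-error guarantee of $A$ gives $p_x \geq 2/3$ when $f(x) = 1$ and $p_x \leq 1/3$ when $f(x) = 0$, which is exactly the condition in \defn{QSZK}. Hence $\QSZK(f) \leq \Q(f)$.

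There is no real obstacle here; the proof is essentially a sanity check, as the author signals in the preceding text. The only subtleties are (i) making sure the two output states have the same dimension (handled by picking a single-qubit register for both) and (ii) making sure the query counts add correctly to $k$, which is automatic since the $\sigma_x$-producing algorithm makes no queries at all. If one were uncomfortable using the dephased output, one could instead keep $\rho_x$ as the full output of $A$ and take $\sigma_x$ to be $|0^m\>\<0^m|$ of matching size; the contractivity of trace distance under the partial trace and dephasing channels then recovers the same inequalities, so the argument is unchanged.
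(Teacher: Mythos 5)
Your proof is correct and follows essentially the same route as the paper: run the bounded-error algorithm, dephase (measure) the output qubit to get $\rho_x = \mathrm{diag}(1-p_x, p_x)$, take $\sigma_x = |0\rangle\langle 0|$ via a zero-query algorithm, and observe $\tr{\rho_x - \sigma_x} = p_x$. Your explicit bookkeeping of the query count ($\Q(f) + 0$) and the dimension-matching remark are the same checks the paper makes, just stated more carefully.
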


\begin{proof}
Let $\Q(f)=k$ and consider the $k$-query algorithm that witnesses this fact. 
Let $p_x$ be the probability that this $k$-query algorithm when run on input $x$ outputs $1$ upon measuring the first qubit.
Since the algorithm computes $f$ with bounded error, we know that $p_x\geq 2/3$ for $1$-inputs and $p_x \leq 1/3$ for $0$-inputs.

Now consider the single-qubit state $\rho_x$, which is obtained by taking the final state of this algorithm, tracing out all the qubits except the first one, and then applying a completely dephasing channel to it. 
This is equivalent to measuring the first qubit in the standard basis and outputting $|b\>$ when the result is $b$.
This state is $\rho_x = \big( \begin{smallmatrix} 1-p_x & 0\\ 0&p_x \end{smallmatrix} \big)$.
Let us also define $\sigma_x$ as $\big( \begin{smallmatrix} 1 & 0\\ 0&0 \end{smallmatrix} \big)$ for all $x$.

Now let us check that the conditions of \defn{QSZK} are satisfied by these states.
For all inputs $x$, we have $\tr{\rho_x - \sigma_x} = |p_x|$.
And we know that $p_x\geq 2/3$ for $1$-inputs and $0\leq p_x \leq 1/3$ for $0$-inputs, which completes the proof.
\end{proof}

The measure $\QSZK(f)$ also satisfies another useful property, that $\QSZK(f)=\Theta(\QSZK(\neg f))$. This is the analogue of the result that $\cl{QSZK} = \cl{coQSZK}$~\cite{Wat02}. Since we do not use this property, we only provide a sketch of the proof.

\para{Sketch of proof of $\QSZK(f)=\Theta(\QSZK(\neg f))$.}
To prove this, we would like to reduce the complete problem to its complement. In other words, we are given two circuits that query an oracle preparing $\rho_x$ and $\sigma_x$ that are either far apart in trace distance (when $f(x)=1$) or close in trace distance (when $f(x)=0$).
From these circuits, we want to define two new states $\rho'_x$ and $\sigma'_x$, such that these states are far when $\rho_x$ and $\sigma_x$ were close, and close when $\rho_x$ and $\sigma_x$ were far. 
Before starting the transformation, we first boost the parameters $2/3$ and $1/3$ to be extremely close to $1$ and $0$ respectively. 
For this sketch we will assume the parameters are exactly $1$ and $0$, which means when the states are far, they are perfectly distinguishable (i.e., $\tr{\rho_x-\sigma_x} = 1$), and when they are close, they are equal (i.e., $\rho_x = \sigma_x)$.

To perform this transformation, consider the pure states output by the circuits before tracing out any qubits. Let $|R_x\>_{BC}$ and $|S_x\>_{BC}$ be the pure state on registers $B$ and $C$, which yields $\rho_x$ and $\sigma_x$, respectively when register $B$ is traced out. More formally, we have
\begin{equation}
\rho_x = \Tr_B(|R_x\>\<R_x|_{BC}) \textrm{ and } \sigma_x = \Tr_B(|S_x\>\<S_x|_{BC}).
\end{equation}
From the pure states $|R_x\>_{BC}$ and $|S_x\>_{BC}$, we define two new pure states on registers $A$, $B$, $C$, and $D$, as follows:
\begin{align}
|R'_x\> &= \frac{1}{\sqrt{2}}\Bigl(|0\>_A|R_x\>_{BC}|0\>_D + |1\>_A|S_x\>_{BC}|0\>_D \Bigr) \textrm{ and }\\ 
|S'_x\> &= \frac{1}{\sqrt{2}}\Bigl(|0\>_A|R_x\>_{BC}|0\>_D + |1\>_A|S_x\>_{BC}|1\>_D \Bigr).
\end{align}
Note that the only difference between these states is in register $D$. If we have circuits preparing states $|R_x\>_{BC}$ and $|S_x\>_{BC}$, it is easy to see that we can construct circuits preparing $|R'_x\>_{ABCD}$ and $|S'_x\>_{ABCD}$.
We now define the states $\rho'_x$ and $\sigma'_x$ from these states by tracing out registers $C$ and $D$:
\begin{equation}
\rho'_x = \Tr_{CD}(|R'_x\>\<R'_x|_{ABCD}) \textrm{ and } \sigma'_x = \Tr_{CD}(|S'_x\>\<S'_x|_{ABCD}).
\end{equation}

We claim that these states satisfy the conditions we require. When $f(x)=1$, we have that $\tr{\rho_x- \sigma_x} = 1$, i.e., the residual state on register $C$ for states $|R'_x\>$ and $|S'_x\>$ is completely distinguishable. In this case, before we trace out registers $C$ and $D$, we could implement a unitary on these registers which reads register $C$ and writes onto register $D$ whether the state in $C$ is $\rho_x$ or $\sigma_x$. This operation maps the state $|R'_x\>$ to the state $|S'_x\>$ and only acts on the traced out qubits, which does not affect the qubits that are not traced out, and we have $\rho'_x = \sigma'_x$.

When $f(x)=0$, we have that $\rho_x = \sigma_x$. In this case we want to show that $\rho'_x$ and $\sigma'_x$ are distinguishable. We will show that after applying a specific unitary to these states are tracing out register $B$, in the first case we are left with the state $|+\>\<+|_A$, but in the second case we have $\frac{1}{2}\id_A$, which can be distinguished.

Since $\rho_x = \sigma_x$, there is a unitary $U_B$ such that $(U_B \otimes \id_C)|R_x\>_{BC} = |S_x\>_{BC}$.
Controlled on the qubit in register $A$, let us apply the unitary $U_B$ to register $B$ of $|R'_x\>$ and $|S'_x\>$ before we trace out registers $C$ and $D$, which is equivalent to applying it after tracing out the registers. 
This makes registers $BC$ unentangled with the rest of the state, and equal to $|S_x\>_{BC}$. In the first case we are left with the state $|+\>_A |0\>_D$ on registers $A$ and $D$, while in the second case we have $\frac{1}{2}(|00\>_{AD} + |11\>_{AD})$. 
Tracing out register $D$ leaves us with the $|+\>$ state in the first case and the maximally mixed state in the second case, as claimed.

\subsection{Relation with adversary bound}

We have already showed that $\QS(f) \leq \Q(f)$ (\prop{QSQ}) and $\QSZK(f) \leq \Q(f)$ (\prop{QSZKQ}). 
We now show that $\QS(f)$ is actually smaller than $\QSZK(f)$. 

\QSQSZK*

\begin{proof}
Let $\QSZK(f) = k$ and consider the quantum algorithms that witnesses this fact. We claim that the tensor product of outputs of these algorithms already satisfies the conditions in \defn{QS} and hence proves $\QS(f) \leq k$.

To see this, observe that the algorithm outputs the state $\rho_x \otimes \sigma_x$ on input $x$, which satisfies the conditions of \defn{QSZK}. More precisely, this means for any $x$ and $y$ such that $f(x)=1$ and $f(y)=0$, we know that $\tr{\rho_x - \sigma_x} \geq 2/3$ and $\tr{\rho_y - \sigma_y} \leq 1/3$. 
We want to show that 
\begin{equation}
\tr{\rho_x \otimes \sigma_x - \rho_y \otimes \sigma_y} \geq 1/6.
\end{equation}
Since trace distance is non-increasing under partial trace, 
we have 
$\tr{\rho_x \otimes \sigma_x - \rho_y \otimes \sigma_y} \geq \tr{\rho_x - \rho_y}$ and 
$\tr{\rho_x \otimes \sigma_x - \rho_y \otimes \sigma_y} \geq \tr{\sigma_x - \sigma_y}$, which imply
\begin{align*} 
&\tr{\rho_x \otimes \sigma_x - \rho_y \otimes \sigma_y} \geq \max\left\{\tr{\rho_x - \rho_y}, \tr{\sigma_x - \sigma_y} \right\}.
\end{align*}
Now if we can show the right-hand side is at least $1/6$, then we are done. To show this, toward a contradiction assume that $\max\left\{\tr{\rho_x - \rho_y}, \tr{\sigma_x - \sigma_y} \right\} < 1/6$. Then we have 
\begin{align*}
    \tr{\rho_x - \sigma_x} &= \tr{\rho_x - \rho_y + \rho_y -\sigma_y + \sigma_y - \sigma_x}\\
    &\leq \tr{\rho_x - \rho_y} + \tr{\rho_y -\sigma_y} + \tr{\sigma_y - \sigma_x}\\
    & < 1/6 + 1/3 + 1/6 = 2/3,
\end{align*}
which contradicts $\tr{\rho_x - \sigma_x} \geq 2/3$.
\end{proof}

As noted, as a corollary of this theorem and $\QS(f) = \Omega(\Adv(f))$, we have for all (partial) functions $f$,
\begin{equation}
\QSZK(f) = \Omega(\Adv(f)).    
\end{equation}
This can be used to prove lower bounds on QSZK protocols for functions. 
For example, consider the OR function and let us try to compute it with an interactive protocol without the zero-knowledge requirement.
It is easy to see that when $\textrm{OR}(x)=1$, a computationally unbounded prover can simply send over the location of a bit $i$ such that $x_i=1$, which can be checked using only $1$ query. 
Of course, this protocol leaks information and in particular lets the verifier know the location of a $1$.
But is it necessary that an efficient protocol for OR must leak information?
Our lower bound says this must be the case, because $\Adv(\mathrm{OR}) = \Omega(\sqrt{n})$ and hence any zero-knowledge protocol for the function must make $\Omega(\sqrt{n})$ queries.

\section{Comparison with other lower bounds}
\label{sec:separations}

In this section, we establish the separations between quantum distinguishing complexity and the adversary bound and the polynomial method claimed in \thm{QSAdvdeg}.

To prove this, we will compose known functions with the index function and establish the behavior of quantum distinguishing complexity under composition with the index function.
This kind of composition was also studied by Chen~\cite{Che16}, who used it to show an oracle separation between $\cl{P}^\cl{SZK}$ and $\cl{QSZK}$.

\subsection{Index functions}

Let $\Ind_k:\B^{k+2^k}\to\B$
denote the index function, the function
that on input $(x,y)$ with $x\in\B^k$
and $y\in \B^{2^k}$, outputs the bit of $y$
indexed by the string $x$.
We wish to study the composition
of the index function with an arbitrary
Boolean function $f$, but composed only
on the first $k$ bits of the index function.
We'll denote this composition by
$\Ind_k\circ_k f$. More precisely, if $f$ is an $n$-bit function,
$\Ind_k\circ_k f$ is a function on $nk+2^k$ bits that evaluates
$f$ on the first $k$ $n$-bit strings to obtain a
binary string $x$ of length $k$, and then uses $x$ to index
into the next $2^k$ bits of the input and outputs the 
bit indexed by $x$.

In addition to the index function, which is total,
we will also study a function we call
the ``unambiguous index function,'' $\UInd_k$.
This is a partial function defined similarly
to the index function, except that
the location of the array $y$ pointed to by the first part of the input is
``marked,'' and we are promised that no other bits of the array are ``marked.''
More explicitly, the function
is defined on $k+2\cdot 2^k$ bits,
with the first $k$ bits indexing a pair
of adjacent bits in the remainder of the input.
So if the first part of the input represents the integer $x$, 
that means it points to the cells $2x$ and $2x+1$ in the second part of the input.
The output of $\UInd_k$ is the first bit of the pair pointed to, i.e., it will be the
bit stored at array location $2x$.
Moreover, we are promised that the second bit of this pair (the bit at array location $2x+1$)
will always be $1$, 
and also that the second bit in every \emph{other} pair (i.e., other than the pair $2x$, $2x+1$) will always be $0$.

Intuitively, there is only one strategy to solve $\Ind_k$, which is to 
read the first $k$ bits and find the cell pointed to.
But to solve $\UInd_k$, there are two good strategies: 
either read the first $k$ bits (and determine $x$), 
or search the remainder of the input
for the unique position where the second bit 
of a pair is $1$, which marks the cell pointed to by $x$.

\subsection{Index function composition}
We now examine the behavior of quantum distinguishing complexity under composition with the Index and Unambiguous Index functions. To prove our result, we need the following strong direct product theorem for quantum query complexity due to Lee and Roland~\cite{LR13}:

\begin{theorem}[Strong direct product]
\label{thm:direct_product}
Let $f$ be a partial Boolean function
with $\Dom(f)\subseteq \B^n$,
and let $f^{(k)}:\Dom(f)^k\to\B^k$
be the task of solving $k$ independent inputs to $f$
simultaneously. Then any quantum
algorithm that solves $f^{(k)}$ with
success probability at least $(5/6)^k$
uses $\Omega(k\Q(f))$ queries.
\end{theorem}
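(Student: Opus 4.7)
The plan is to prove the strong direct product theorem via the multiplicative (``amplitude'') adversary method pioneered by \v{S}palek and refined for the negative-weights setting by Lee and Roland. The ordinary additive general adversary bound $\mathrm{ADV}^{\pm}(f)$ characterizes $\Q(f)$ up to constants, but naively pushing it through $k$ parallel copies only yields an $\Omega(\Q(f))$ lower bound at bounded error --- not an $\Omega(k\,\Q(f))$ lower bound at exponentially small success probability. The multiplicative version repairs this by tracking a \emph{ratio} progress measure that can change by only a bounded factor per query yet must change by an exponentially large factor overall when $k$ instances are solved simultaneously.

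The key steps are as follows. First, fix a one-parameter family of multiplicative adversary matrices $\Gamma_\lambda$ for $f$, each a positive semidefinite matrix on $\mathbb{C}^{\Dom(f)}$ whose off-diagonal structure witnesses pairs $(x,y)$ with $f(x)\neq f(y)$, and whose spectral gap relative to the query projectors (the Hadamard products of $\Gamma_\lambda$ with the indicators of $x_i=y_i$) forces each quantum query to perturb an associated weight by a factor of at most $1+O(1/\Q(f))$. Second, define a progress measure $W_t$ as the mass of the algorithm's joint input-workspace state lying above a chosen eigenvalue threshold of $\Gamma_\lambda$ after $t$ queries, and show that any algorithm solving $f^{(k)}$ with success probability $(5/6)^k$ must drive $W_T/W_0$ up to a quantity exponentially large in $k$. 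Third, analyze the $k$-fold tensor power: build an adversary matrix for $f^{(k)}$ from $\Gamma_\lambda^{\otimes k}$ (with an appropriately rescaled threshold); because a single query to the concatenated oracle touches only one of the $k$ instances, the per-query multiplicative change stays $1+O(1/\Q(f))$, while the overall target ratio blows up as $(1+\Omega(1))^k$. Fourth, combining these facts, reaching a target ratio of order $(6/5)^k$ through per-query steps of $1+O(1/\Q(f))$ requires $T=\Omega(k\,\Q(f))$ queries.

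The main technical obstacle I anticipate is the third step: making the multiplicative potential tensorize cleanly when the algorithm is free to entangle its registers and interleave its queries across all $k$ instances. The decisive insight is that ``progress'' against the $k$-fold adversary is \emph{additive} across the $k$ coordinates (each query advances only one of them) while the ``distance to success'' is \emph{multiplicative} across them; this asymmetry is exactly what yields the $k$-fold blowup at the exponentially small success parameter $(5/6)^k$. Converting this intuition into a rigorous spectral statement about $\Gamma_\lambda^{\otimes k}$ and its interaction with the query matrices is the heart of the Lee--Roland argument, and it leans on the SDP-dual characterization of the general adversary together with a tensorization lemma for the corresponding semidefinite programs, built on top of Reichardt's characterization $\Q(f)=\Theta(\mathrm{ADV}^{\pm}(f))$.
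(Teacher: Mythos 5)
This theorem is not proved in the paper; it is stated and cited directly from Lee and Roland [LR13], so there is no in-paper proof against which to compare your argument. Your sketch does correctly identify the Lee--Roland route: introduce a multiplicative adversary bound $\mathrm{MADV}_\lambda(f)$, show it lower-bounds quantum query complexity at exponentially small success probability, show it satisfies a perfect direct product (tensorization) property $\mathrm{MADV}_\lambda(f^{(k)})\geq \mathrm{MADV}_\lambda(f)^k$, and then relate $\mathrm{MADV}_\lambda(f)$ as $\lambda\to 1$ to the additive negative-weights adversary $\mathrm{Adv}^{\pm}(f)$, which equals $\Q(f)$ up to constants by Reichardt's characterization. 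That is the same skeleton as [LR13].

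That said, what you have written is a roadmap rather than a proof. Every step that actually carries the weight is gestured at rather than carried out: the exact form of the multiplicative adversary SDP and the threshold-eigenspace progress measure; the per-query progress bound (your ``$1+O(1/\Q(f))$ per query'' description is a reasonable way to normalize the intuition, but in Lee--Roland $\lambda>1$ is a fixed constant and the dependence on $\Q(f)$ enters through the SDP \emph{value}, not a per-query drift rate, so that step needs care to state correctly); the spectral argument taking $\lambda\to 1$ and connecting to $\mathrm{Adv}^{\pm}$; and the tensorization lemma for the $k$-fold instance, which is precisely the point you flag as the ``main technical obstacle'' but then leave unresolved. If the intent is a self-contained proof, all of these must be supplied; if the intent is to explain why the paper is justified in citing [LR13], then your summary is a faithful and accurate account of that paper's approach.
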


We can now prove our composition theorems.

\begin{theorem}
\label{thm:Ind}
There is a constant $c$ such that
for any partial function $f$, if
$k\geq c\log \Q(f)$, then
\begin{equation}\QS(\Ind_k \circ_k f) = \Theta(\Q(\Ind_k\circ_k f))= \Theta(k\Q(f))\end{equation}
\begin{equation}\QS(\UInd_k \circ_k f) = \Theta(\Q(\UInd_k\circ_k f))= \Theta(k\Q(f)).\end{equation}
In other words, composing a function with a large
enough index gadget makes $\QS$ and $\Q$ coincide.
\end{theorem}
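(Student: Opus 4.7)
The plan is to prove the upper bound by direct construction and the lower bound by reducing to \thm{direct_product}. Since $\QS\le\Q$ holds trivially, the chain of equalities will follow from the upper bound $\Q(\Ind_k\circ_k f)=O(k\Q(f))$ (and its $\UInd_k$ analogue) together with the matching lower bound $\QS(\Ind_k\circ_k f)=\Omega(k\Q(f))$.

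For the upper bound, I would use standard quantum composition: amplify the optimal $\Q$-algorithm for $f$ to error at most $1/(10k)$ at cost $O(\Q(f)\log k)$ per sub-instance, run it on each of the $k$ inner copies of $f$ to learn the index string $x=(f(z_1),\ldots,f(z_k))$ with probability $\ge 9/10$, and then read off $y_x$ with $O(1)$ further queries. The hypothesis $k\ge c\log\Q(f)$ (or, alternatively, Reichardt-style logarithm-free bounded-error composition) absorbs the $\log k$ overhead to yield $O(k\Q(f))$. The same plan works for $\UInd_k$ because reading the $k$-bit index deterministically still costs only $O(k)$ queries.

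For the lower bound, I would take an optimal $T$-query distinguishing algorithm $A$ for $\Ind_k\circ_k f$ and construct from it a quantum algorithm $B$ that, on input oracles for $z=(z_1,\ldots,z_k)$, outputs $(f(z_1),\ldots,f(z_k))$ with success probability at least $(5/6)^k$ using $O(T)$ queries. Then \thm{direct_product} forces $T=\Omega(k\Q(f))$. Two observations drive the reduction. First, $B$ controls the $y$-portion of the composed input entirely, so every query $A$ makes to $y$ can be answered internally — even coherently, with $y$ placed in quantum superposition — without touching the true $z$-oracle. Second, there exist patterns $y^{(j)}$ with $\Ind_k(x,y^{(j)})=x_j$, so running $A$ on $(z,y^{(j)})$ yields an output state that distinguishes $f(z_j)=0$ from $f(z_j)=1$. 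By executing $A$ coherently with $y$ held in superposition over $y^{(1)},\ldots,y^{(k)}$ and applying an appropriate joint measurement on the coordinate register together with the output register, the full string $x$ can be extracted from a single $T$-query run. For the $\UInd_k$ variant one additionally must rule out the Grover-search shortcut on the $y$-array for the unique marker bit; this is exactly where the hypothesis $k\ge c\log\Q(f)$ is used, since unstructured search over the $\Theta(2^k)$ candidate marker positions costs $\Omega(2^{k/2})=\Omega(\Q(f)^{c/2})$ queries, exceeding $k\Q(f)$ once $c$ is taken large enough.

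The main obstacle will be implementing the single-run extraction of all $k$ bits. A naive per-coordinate reduction — running $A$ once for each $y^{(j)}$ — spends $O(kT)$ queries and only gives the weaker bound $T=\Omega(\Q(f))$, a factor of $k$ short of the target. Avoiding this loss appears to require genuinely leveraging the $y$-register as a quantum resource to probe all $k$ coordinates of $x$ in parallel from one $T$-query execution of $A$, and carefully handling the interaction between the coherent $y$-superposition and the partial-trace-allowed output state in \defn{QS} is where I expect the main technical difficulty to lie.
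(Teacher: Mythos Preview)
Your lower bound has a genuine gap at the key step. The claim that ``executing $A$ coherently with $y$ held in superposition over $y^{(1)},\ldots,y^{(k)}$ and applying an appropriate joint measurement'' recovers the full string $x$ with probability $(5/6)^k$ does not hold for a $\QS$ algorithm, and in fact does not hold even for a $\Q$ algorithm. A distinguishing algorithm guarantees only that output states for $0$-inputs and $1$-inputs are pairwise far in trace distance; there is no fixed measurement that reads off the function value, and the output state $|\psi_{z,y^{(j)}}\rangle$ depends on all of $z$, not just on $x_j=f(z_j)$. The superposed state $\tfrac{1}{\sqrt k}\sum_j|j\rangle|\psi_{z,y^{(j)}}\rangle$ has no Bernstein--Vazirani-type structure to exploit, so any measurement on it yields at best one coordinate of $x$, leaving you with exactly the $O(kT)$-query per-coordinate reduction you already rejected as too weak.

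The paper's route avoids reading function values from $A$'s output altogether. It appends an all-zero (hence illegal) array $y$, runs $A$ for a uniformly random number $t\in[T]$ of steps, and then \emph{measures the query register}. The pair of array cells at position $x$ is a sensitive block for the composed function---setting it to $01$ gives a legal $0$-input and to $11$ a legal $1$-input---so by the hybrid argument (\lem{hybrid}) $A$ is found querying inside that pair with probability $\Omega(1/T^2)$. But the index of that pair \emph{is} the string $x=(f(z_1),\ldots,f(z_k))$, so a single $T$-query run already solves $f^{(k)}$ with probability $\Omega(1/T^2)$. \thm{direct_product} then yields either $T=\Omega(k\Q(f))$ or $1/T^2=O((5/6)^k)$; the hypothesis $k\ge c\log\Q(f)$ is used precisely here, to force $T=2^{\Omega(k)}=\Omega(k\Q(f))$ in the second alternative---it is not used to rule out a Grover search on the marker bit as you suggested.
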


\begin{proof}
Recall that quantum query complexity
composes perfectly \cite{LMR+11},
so $\Q(\Ind_k\circ f)=\Theta(\Q(\Ind_k)\Q(f))
=O(k\Q(f))$. We argue that
$\Q(\Ind_k\circ_k f)$ is smaller than
$\Q(\Ind_k\circ f)$. This is 
because we can convert any algorithm for
$\Q(\Ind_k\circ f)$ into an algorithm
for $\Q(\Ind_k\circ_k f)$: fix a $0$-input
$x^0$ and a $1$-input $x^1$ for $f$;
then, given an input to $\Q(\Ind_k\circ_k f)$,
pretend that each $0$ bit in the second half
of the input is actually $x^0$, and
that each $1$ bit is actually $x^1$
(the algorithm can do this by applying
the appropriate unitary). This
converts the input into an input for
$\Q(\Ind_k\circ f)$, completing the reduction.

Thus $\Q(\Ind_k\circ_k f)=O(k\Q(f))$.
Similarly, $\Q(\UInd_k\circ_k f)=O(k\Q(f))$.
Since $\QS$ is smaller than $\Q$,
it remains only to show that
$\QS(\Ind_k \circ_k f)=\Omega(k\Q(f))$
and $\QS(\UInd_k \circ_k f)=\Omega(k\Q(f))$.
We complete the argument for $\UInd$; the argument
for $\Ind$ is similar.

Let $Q$ be an optimal quantum distinguishing algorithm for
$\UInd_k\circ_k f$. We turn $Q$ into a quantum algorithm
$Q'$ that uses the same number of queries,
and solves all $k$ copies of $f$ with non-negligible
probability; we then apply the direct product
theorem (\thm{direct_product}) to lower bound the number of queries
required by $Q'$, and hence by $Q$.

Given $k$ inputs to $f$, the first thing the algorithm
$Q'$ does is append an all-$0$ array to turn it into
an input to $\UInd_k\circ_k f$. 
(Since the array is all zeros, the new input does not
satisfy the promise of $\UInd_k\circ_k f$,
but we will still be able to run $Q$ on it.)
Then $Q'$ picks a random
number $t$ between $1$ and $T$ uniformly,
where $T=\QS(\UInd_k\circ_k f)$
is the number of queries used by $Q$, and
simulates $Q$ for $t$ queries. The algorithm $Q'$ then
measures the state of $Q$ to determine the position
at which $Q$ was going to query. If this position
is in the array part of the input and is inside
a pair that has index $i\in\B^k$, the algorithm $Q'$
will then output the string $i$.

Consider the correct pair in the array (the
one really pointed to by the $k$ copies of $f$).
Flipping the pair from $00$ to $01$ causes the input to satisfy
the promise of $\UInd_k\circ_k f$, and causes the output to
become a $0$-input. On the other hand,
flipping the pair from $00$ to $11$ causes
the input to become a $1$-input. Let $|\psi\rangle$
be the final state of $Q$ when run on the original, illegal input.
Let $|\psi_0\rangle$ be the final state of $Q$ when run on
the flipped $0$-input, and let $|\psi_1\rangle$
be the final state of $Q$ when run on the $1$-input.
We know that $|\psi_0\rangle$ and $|\psi_1\rangle$
are far in trace distance. Hence $|\psi\rangle$
must be a far in trace distance from at least one on them.

Thus
by \lem{hybrid}, the probability that $Q'$ finds $Q$
querying inside the correct pair of the array is $\Omega(1/T^2)$.
This means that $Q'$ outputs the correct
string of answers to the $k$ inputs to $f$
is with probability at least $\Omega(1/T^2)$.
Since $Q'$ uses only $T$ queries, by \thm{direct_product} we must have
either $T=\Omega(k\Q(f))$ or
$1/T^2=O((5/6)^k)$. The latter implies
$T=\Omega((6/5)^{k/2})=
\Omega((6/5)^{k/4}\cdot (6/5)^{k/4})
=2^{\Omega(k)}\cdot 2^{\Omega(k)}$.
When $k\geq c\log\Q(f)$ for a large enough
constant $c$, this gives
$T\geq 2^{\Omega(k)}\Q(f)=\Omega(k\Q(f))$.
Recalling that $T=\QS(\UInd_k\circ_k f)$, we get
$\QS(\Ind_k\circ_k f)=\Omega(k\Q(f))$, as desired.
\end{proof}

\subsection{Separations}
Using this theorem we can now establish \thm{QSAdvdeg}, restated for convenience:

\QSAdvdeg*

\begin{proof}
There exists an $n$-bit total function $f'$ with a quadratic separation between quantum query complexity and the adversary bound, i.e., 
$\Q(f')=\tOmega(\Adv(f')^2)$. The function is $k$-sum with $k\approx \log n$ (see \cite{BS13,ABK16} for more details). 
Now consider the function $f = \Ind_k \circ f'$, where $k=\Omega(\log\Q(f))$.
By \thm{Ind}, the $\QS$ of these functions increases to $\Q$.
However, since the adversary bound satisfies a composition theorem~\cite{HLS07}, its value only increases by a factor of $k$.
Thus $\QS(f) = \tOmega(\Adv(f)^2)$.

Similarly, if we start with the collision problem which has $\Q(h') = \Theta(n^{1/3})$~\cite{AS04}, but $\Adv(h')=O(1)$, and define $h = \Ind_k \circ h'$ for $k=\Theta(\log n)$, then $\QS(h) = \tOmega(n^{1/3})$ but $\Adv(f) = O(\log n)$.

There also exist total functions with
$\Q(g')\geq \adeg(g')^{4-o(1)}$~\cite{ABK16}.
Composing this function with $\Ind_k$ on the first $k$ bits
with $k=\Omega(\log\Q(f))$ yields a function $g$ with the desired separation, 
since approximate polynomial degree also composes in the upper bound direction~\cite{She12}.
\end{proof}

\section{Lifting theorems}
\label{sec:lifting}

\subsection{Background}
Lifting theorems are results that relate communication complexity measures to query complexity
measures. 
For a fixed query measure, such as $\D(f)$,
and a communication complexity measure that
intuitively corresponds to it, such as
deterministic communication complexity $\D^{\CC}(F)$,
we may hope to be able to prove a theorem of the form:
there exists some communication gadget $G$ such that
$\D^{\CC}(f\circ G)=\tTheta(\D(f))$.
In fact, when the size of the gadget $G$ is allowed
to depend on the input size of $f$ and the
$\tTheta$ is allowed to hide $\polylog n$ factors,
such a result is known~\cite{GPW15}.

We remark that the upper bound direction---showing
the communication measure of $f\circ G$ is at most
the corresponding query measure of $f$---is usually
easy. We can simulate the query algorithm in
the communication complexity world, losing
only a multiplicative factor that depends on
the difficulty of computing $G$. The lower bound
direction, which lower bounds a communication
complexity measure by a query complexity measure,
is usually much harder, and is what we will usually
refer to when we use the term ``lifting theorem.''

The result of \cite{GPW15} gives a lifting theorem
for deterministic protocols, which we will denote 
by $\D\to\D^{\CC}$ to mean it transfers a lower bound on the first
measure to a lower bound on the second. 
Recently, lifting theorems have been shown for 
$\R$ and $\R_0$ (with the corresponding communication complexity
measures being the obvious ones: randomized
communication with bounded error and randomized
communication with zero error, denoted $\R^{\CC}$
and $\R_0^{\CC}$)~\cite{GPW17}. 
We do not know how to lift $\Q$ or $\Q_0$ to their analogous
communication measures; this is likely to be
significantly harder.

\subsection{Lifting theorem reductions}
In this section, we prove several lifting
theorem reductions, showing that a lifting
theorem for one measure (such as $\Q_0$)
implies a lifting theorem for another measure
(such as $\Q$). Our work (including prior work~\cite{BK16}) is the first instance we
know of where such reductions are shown;
it is perhaps surprising that these reductions
can be proven without proving the lifting
theorems themselves.

\begin{theorem}\label{thm:Q0_lift}
If there is a lifting theorem for $\Q_0$
with gadget $G$, then there is also
a lifting theorem for $\Q$ with the same gadget $G$.
\end{theorem}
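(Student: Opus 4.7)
The plan is to compose $f$ with the unambiguous index gadget $\UInd_k$ from \sec{separations} and exploit its built-in verification structure to convert a bounded-error protocol into a zero-error one essentially without loss. Set $k=\Theta(\log\Q(f))$ so that \thm{Ind} gives $\QS(\UInd_k\circ_k f)=\Theta(k\Q(f))$, and since $\Q_0\ge\Q\ge\QS$ we also have $\Q_0(\UInd_k\circ_k f)=\tOmega(k\Q(f))$. Applying the hypothesized $\Q_0$ lifting theorem to $h=\UInd_k\circ_k f$ then gives
\[
\Q_0^{\CC}\bigl((\UInd_k\circ_k f)\circ G\bigr) = \tOmega\bigl(\Q_0(\UInd_k\circ_k f)\bigr) = \tOmega(k\Q(f)).
\]

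The heart of the argument is a reduction showing that any bounded-error quantum protocol $P$ for $f\circ G$ of cost $C$ can be upgraded to a zero-error quantum protocol $P'$ for $(\UInd_k\circ_k f)\circ G$ of cost $\tO(Ck)$. To build $P'$: first amplify $P$ to error $1/(100k)$ via $O(\log k)$ repetitions and majority vote, then run the amplified protocol on each of the $k$ copies of $f\circ G$ embedded in the input to produce a candidate index $x=x_1\cdots x_k$, and finally query (through the gadget $G$) the second bit of the array pair at position $x$. By the $\UInd_k$ promise, \emph{exactly one} pair in the array has its second bit equal to $1$, namely the marked pair at the correct index. So if the queried bit is $1$, then $x$ must be the true index and $P'$ outputs the first bit of that pair; if the queried bit is $0$, then $P'$ outputs ``?''. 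A union bound over the $k$ copies shows $x$ is correct with probability at least $99/100$, so ``?'' occurs with probability at most $1/100<1/2$, and $P'$ never outputs a wrong answer---this is exactly the zero-error guarantee.

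Combining the two bounds gives $\tO(Ck)\ge\tOmega(k\Q(f))$; cancelling the common factor $k=\Theta(\log\Q(f))$ yields $C=\tOmega(\Q(f))$, and hence $\Q^{\CC}(f\circ G)=\tOmega(\Q(f))$ as required. The main obstacle I anticipate is polylog book-keeping: the amplification overhead $O(\log k)$, the polylog slack hidden inside the $\tOmega$ in the assumed $\Q_0$ lifting, and the cost $\Q^{\CC}(G)$ of executing a single $G$-gadget for the verification step must all fit into the final logarithmic slack. Using $\UInd_k$ rather than $\Ind_k$ is essential here: only the unambiguous variant supplies a single-query, error-free check that detects a wrong candidate $x$, which is what lets the reduction preserve zero-error rather than merely bounded-error.
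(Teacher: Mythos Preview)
Your approach is essentially identical to the paper's: compose $f$ with $\UInd_k$ for $k=\Theta(\log\Q(f))$, apply the assumed $\Q_0$ lifting to $h=\UInd_k\circ_k f$, and show that a bounded-error protocol for $f\circ G$ upgrades to a zero-error protocol for $h\circ G$ by running $k$ amplified copies and then checking the marker bit. One small correction: in the verification step you must evaluate the relevant $G$-instances \emph{exactly}, not with cost $\Q^{\CC}(G)$ in the bounded-error sense---a bounded-error evaluation of the marker (or output) bit could report ``verified'' on a wrong index and destroy the zero-error guarantee; simply have one party send the raw gadget input for those one or two positions, costing $O(|G|)$ bits, which is absorbed into the polylog slack.
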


\begin{proof}
Fix a partial function $f$. We wish to show
that $\Q^{\CC}(f\circ G)=\tOmega(\Q(f))$
using a lifting theorem for $\Q_0$.

Let $g=\UInd_k\circ_k f$, with $k=\Theta(\Q(f))$.
By \thm{Ind}, we have $\Q(g)=\tOmega(\Q(f))$.
Next, apply the lifting theorem to $g$ to get
\begin{equation}
\Q_0^{\cc}(g\circ G)=\tOmega(\Q_0(g))=\tOmega(\Q(g))=\tOmega(\Q(f)).
\end{equation}
To complete the argument, it remains to show that
$\Q_0^{\CC}(g\circ G))=\tO(\Q^{\CC}(f\circ G))$.
Note that $g\circ G=\UInd_k\circ_k f\circ G$.
If we have a communication protocol for $f\circ G$,
we can simulate it $k$ times (and use error reduction) to obtain the correct index
with constant error. We can then use the promise of $\UInd$ to check
if the index is correct, by verifying that the second bit of the
pair at that index is $1$. This turns the algorithm into a zero-error
algorithm. Since $k=O(\log\Q(f))$, our algorithm uses only
$\tO(\Q^{\CC}(f\circ G))$ communication.
Thus $\Q^{\CC}(f\circ G)=\tOmega(\Q(f))$, as desired.
\end{proof}

\begin{theorem}\label{thm:QSZK_lift}
If there is a lifting theorem for $\QSZK$ with gadget $G$, then
there is also a lifting theorem for $\Q$ with the same gadget $G$.
\end{theorem}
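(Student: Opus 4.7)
My plan is to mirror the proof of \thm{Q0_lift} closely. In that proof, the key ingredient is that $\Q_0$ and $\Q$ coincide on $\UInd_k\circ_k f$ (via the unambiguity trick); here the role of this ingredient is played jointly by \thm{Ind}, which says $\QS$ on $\UInd_k\circ_k f$ equals $\Q$, and by \thm{QSQSZK}, which gives $\QSZK\ge\QS$.

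Concretely, I would fix a partial function $f$, set $k=c\log\Q(f)$ as required by \thm{Ind}, and let $g=\UInd_k\circ_k f$. Then \thm{Ind} gives $\QS(g)=\Theta(k\Q(f))=\tOmega(\Q(f))$, which \thm{QSQSZK} promotes to $\QSZK(g)\ge\QS(g)=\tOmega(\Q(f))$. Feeding $g$ into the hypothesized $\QSZK$ lifting theorem yields
\[\QSZK^{\cc}(g\circ G)=\tOmega(\QSZK(g))=\tOmega(\Q(f)).\]

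The remaining task is to show $\QSZK^{\cc}(g\circ G)=\tO(\Q^{\cc}(f\circ G))$. The communication analog of \prop{QSZKQ}, whose proof only manipulates the output state and not the query model, gives $\QSZK^{\cc}(h)\le\Q^{\cc}(h)$ for every $h$, so it suffices to construct a bounded-error quantum protocol for $g\circ G$ of cost $\tO(\Q^{\cc}(f\circ G))$. To do this, take any $\Q^{\cc}$ protocol for $f\circ G$, amplify its error to $1/(10k)$ at multiplicative cost $O(\log k)$, run the amplified protocol independently on each of the $k$ $f$-blocks in the input to recover the index $x\in\B^k$ with constant total probability, and then execute $G$ once on the array cell indexed by $x$ to output the answer. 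Since $k=O(\log\Q(f))$ and $|G|$ is polylogarithmic, the total communication is $\tO(\Q^{\cc}(f\circ G))$. Chaining the two bounds gives $\Q^{\cc}(f\circ G)=\tOmega(\Q(f))$.

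I do not see any substantial obstacle beyond the machinery already assembled. The one point that genuinely needs checking is the communication analog of \prop{QSZKQ}: since that proof treats the quantum algorithm as a black box that produces an output density matrix, the same manipulation (take $\rho_x$ to be the dephased acceptance qubit and $\sigma_x=|0\rangle\langle 0|$) works verbatim when the ``algorithm'' is a two-party protocol whose output state is held jointly by the two players. Everything else is bookkeeping around \thm{Ind} and \thm{QSQSZK}.
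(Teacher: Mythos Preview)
Your proposal is correct and follows essentially the same approach as the paper: compose with an index gadget to make $\QS$ match $\Q$ (\thm{Ind}), promote to $\QSZK$ via \thm{QSQSZK}, apply the assumed lifting, and then use $\QSZK^{\cc}\le\Q^{\cc}$ together with the easy reduction $\Q^{\cc}(g\circ G)=\tO(\Q^{\cc}(f\circ G))$. The only cosmetic difference is that the paper uses $\Ind_k$ rather than $\UInd_k$, since the unambiguity check is not needed here.
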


By a lifting theorem for $\QSZK$, we mean a theorem that lifts
it to some communication complexity analogue $\QSZK^{\CC}$.
The only property we use of $\QSZK^{\CC}$ is that it lower bounds $\Q^{\CC}$.

\begin{proof}
Let $f$ be a partial function. Let $g=\Ind_k\circ_k f$,
where $k=\Theta(\log\Q(f))$. By \thm{Ind}, $\QS(g)=\tOmega(\Q(f))$.
By \thm{QSQSZK}, $\QSZK(g)=\Omega(\QS(g))=\tOmega(\Q(f))$. Then
\begin{equation}
\Q^{\CC}(g\circ G)=\Omega(\QSZK^{\CC}(g\circ G))
    =\tOmega(\QSZK(g))=\tOmega(\Q(f)).
    \end{equation}
Also, note that if we had a quantum communication protocol for $f\circ G$
we could easily convert it to a communication protocol for
$g\circ G=\Ind_k\circ_k f\circ G$. Thus
$\Q^{\CC}(f\circ G)=\tOmega(\Q^{\CC}(g\circ G)=\tOmega(\Q(f))$,
as desired.
\end{proof}

\begin{theorem}\label{thm:QS_lift}
If there is a lifting theorem that lifts $\QS\to\Q^{\CC}$ with gadget $G$, then
there is also a lifting theorem for $\Q$ with the same gadget $G$.
\end{theorem}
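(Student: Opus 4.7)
The plan is to follow the same template used in \thm{Q0_lift} and \thm{QSZK_lift}: use an index composition to inflate $\QS$ up to $\Q$, apply the hypothesized lifting theorem to the composed function, and then argue that a protocol for $f\circ G$ can simulate a protocol for the composed function with only polylogarithmic overhead.

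Concretely, given a partial function $f$, I would set $g=\Ind_k\circ_k f$ with $k=\Theta(\log\Q(f))$. By \thm{Ind} this choice of $k$ is large enough that $\QS(g)=\Theta(\Q(g))=\Theta(k\Q(f))=\tOmega(\Q(f))$. Invoking the hypothesized $\QS\to\Q^{\CC}$ lifting theorem on $g$ (with gadget $G$) then gives
\begin{equation}
\Q^{\CC}(g\circ G)=\tOmega(\QS(g))=\tOmega(\Q(f)).
\end{equation}

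The remaining step is to show $\Q^{\CC}(g\circ G)=\tO(\Q^{\CC}(f\circ G))$. Since $g\circ G=\Ind_k\circ_k f\circ G$, the input consists of $k$ independent inputs to $f\circ G$ (together with an array of $2^k$ gadgetized bits). Given any bounded-error quantum communication protocol $\Pi$ for $f\circ G$, I would run $\Pi$ on each of the $k$ copies of $f$-inputs, using $O(\log k)$-fold error reduction so that with constant probability we recover all $k$ output bits simultaneously, yielding the index $x\in\B^k$. Alice and Bob then spend an additional $O(\cc(G))$ communication to evaluate the single gadget $G$ at array position $x$, which outputs $(\Ind_k\circ_k f\circ G)(\text{input})$ with bounded error. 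The total cost is $O(k\log k)\cdot\Q^{\CC}(f\circ G)+O(\cc(G))$, and since $k=O(\log\Q(f))$ and $\cc(G)$ is poly-logarithmic in the parameters of interest, this is $\tO(\Q^{\CC}(f\circ G))$. Combining with the lifting-derived lower bound gives $\Q^{\CC}(f\circ G)=\tOmega(\Q(f))$.

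There is no serious obstacle beyond bookkeeping: the main point is simply that index composition is cheap both for $f\circ G$ (we just solve $k$ independent $f\circ G$ instances and then read one indexed gadget) and for the query measure $\QS$ (by \thm{Ind}, it jumps up to $\Q$). The only thing to be careful about is handling the additive $\cc(G)$ overhead and the error-reduction factor, but neither affects the $\tOmega$ conclusion. If one wished to avoid even the $O(\cc(G))$ additive term, one could instead use $g=\UInd_k\circ_k f$ exactly as in \thm{Q0_lift} and verify the recovered index via the promise structure, but $\Ind_k$ suffices here since we only need a bounded-error protocol for $g\circ G$.
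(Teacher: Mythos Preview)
Your proposal is correct and follows essentially the same approach as the paper's own proof: set $g=\Ind_k\circ_k f$ with $k=\Theta(\log\Q(f))$, invoke \thm{Ind} to get $\QS(g)=\tOmega(\Q(f))$, apply the hypothesized $\QS\to\Q^{\CC}$ lifting to $g$, and then observe that a bounded-error protocol for $f\circ G$ yields one for $g\circ G$ with only polylogarithmic overhead. You supply a bit more detail on the protocol-conversion step (error reduction across the $k$ copies and the extra $O(\cc(G))$ to read the indexed array cell) than the paper does, but the argument is the same.
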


By a lifting theorem for $\QS\to\Q^{\CC}$, we mean a theorem that shows
$\Q^{\CC}(f\circ G)=\tOmega(\QS(f))$ for all partial functions $f$.
This is formally easier to prove than a $\tOmega(\Q(f))$ lower bound,
but we show it is actually equivalent.

\begin{proof}
Let $f$ be a partial function. Let $g=\Ind_k\circ_k f$,
where $k=\Theta(\log\Q(f))$. By \thm{Ind}, $\QS(g)=\tOmega(\Q(f))$. Then
\begin{equation}\Q^{\CC}(g\circ G)=\tOmega(\QS(g))=\tOmega(\Q(f)).\end{equation}
Also, note that if we had a quantum communication protocol for $f\circ G$
we could easily convert it to a communication protocol for
$g\circ G=\Ind_k\circ_k f\circ G$. Thus
$\Q^{\CC}(f\circ G)=\tOmega(\Q^{\CC}(g\circ G)=\tOmega(\Q(f))$,
as desired.
\end{proof}

In summary, what we have shown is that a lifting theorem for $\Q$ is implied by a lifting theorem for either $\Q_0$, $\QSZK$, or a $\QS\to\Q^\cc$ lifting theorem. In fact, each of these statements also has a classical analogue which remains true. Proving a lifting theorem for $\R_0$, $\SZK$, or $\RS \to \R^\cc$ would imply a lifting theorem for $\R$. 
This can be proved analogously; the only property we need is that
$\RS(\UInd_k\circ_k f)=\tOmega(\R(f))$ when $k$ is at least
polylogarithmic in $\R(f)$. An equivalent statement to this
was proven in \cite{BK16}. However, since lifting theorems for $\R$ and $\R_0$ are already known (with an index gadget \cite{GPW17}), this reduction is less interesting in the classical case, though it might still be relevant
for proving lifting theorems with other gadgets.

\section*{Acknowledgements}
We thank Scott Aaronson, Mika G\"o\"os, John Watrous, and Ronald de Wolf for helpful conversations about this work.

Most of this work was performed while the first author was at the Massachusetts Institute of Technology and the University of Maryland and the second author was at the Massachusetts Institute of Technology.
This work was partially supported by NSF grant CCF-1629809.

\appendix
\section{Proof of the hybrid argument}
\label{app:hybrid}
In this section, we prove \lem{hybrid}, restated below for convenience.

\hybrid*

\begin{proof}
We start by fixing some notation. Let the quantum query algorithm $Q$  act on $m$ qubits, 
initialized in the all-zeros state $|0^m\>$. 
A $T$-query algorithm is specified by $T+1$ unitaries $U_0, U_1, \ldots, U_T$ acting on $m$ qubits.
For any input $x\in\B^n$, the oracle $O_x$ acts as $O_x|i,b\> = |i,b\oplus x_i\>$ for all $i\in[n]$ and $b\in \B$.
The output state produced by this quantum algorithm (before measurement) on input $x$ is 
\begin{equation}|\psi_x\> = U_T O_x U_{T-1} O_x \cdots O_x U_1 O_x U_0 |0^m\>,\end{equation}
where $O_x$ is implicitly $O_x\otimes \id$ if $O_x$ acts on fewer than $m$ qubits.
Within the $m$ qubits, we further group the qubits into three registers, the first register holds an index $|i\>$, for $i\in [n]$, the second holds a qubit $|b\>$, for $b\in\B$, and the third register contains all the remaining qubits.

For a quantum algorithm outputting a Boolean function, we assume that the first qubit of $|\psi_x\>$ is measured at the end to determine the output. 
A quantum distinguishing algorithm may trace out some qubits of $|\psi_x\>$ before producing an output or it may simply output the state $|\psi_x\>$ without loss of generality, since tracing out qubits cannot increase the distance between a pair of states.

In our case we have an algorithm $Q$ that accepts $x$ and rejects $x^B$ with high probability.
To be more concrete, let us assume $Q$ has error probability $\epsilon$.
As we saw in \prop{QSQ}, such an algorithm can be made to output a mixed state $\rho_x$ such that
$\tr{\rho_x-\rho_{x^B}}\ge 1-2\epsilon$.
Since trace distance is non-increasing under partial trace~\cite[Th.~9.2]{NC00}, we get that the pure output states must also be far, and hence
$\tr{|\psi_x\rangle\langle\psi_x|-|\psi_{x^B}\rangle\langle\psi_{x^B}|}\ge 1-2\epsilon$.
This is all we need to assume about the output of the algorithm on these inputs.

We now consider the intermediate states produced by this quantum algorithm after $t$ queries to input $x$.
Let
\begin{equation}
|\psi^0_x\rangle:=U_0|0^m\rangle \quad \textrm{ and }\quad  |\psi_x^t\rangle:=U_t O_x |\psi^{t-1}_x\>.
\end{equation}
for $t\in[T]$. The final state of the algorithm is $|\psi_x^T\rangle = |\psi_x\>$, and hence we have
\begin{equation}
\tr{|\psi^T_x\rangle\langle\psi^T_x|-|\psi^T_{x^B}\rangle\langle\psi^T_{x^B}|}\ge 1-2\epsilon.
\end{equation}
We know that the states are far apart in trace distance, but we also want to bound their closeness in $\ell_2$ distance. By \eq{tr}, we have
\begin{equation}\label{eq:epsupper}
|\langle\psi^T_x|\psi^T_{x^B}\rangle|
\le\sqrt{1-(1-2\epsilon)^2} = 2\sqrt{\epsilon(1-\epsilon)}
\le 1-(1/2)(1-2\epsilon)^2.
\end{equation}
Then we have
\begin{equation}\||\psi^T_x\rangle-|\psi^T_{x^B}\rangle\|^2
=2-\langle \psi^T_{x^B}|\psi^T_x\rangle
    -\langle \psi^T_x|\psi^T_{x^B}\rangle
=2-2\re(\langle \psi^T_{x^B}|\psi^T_x\rangle)
\ge 2-2|\langle \psi^T_{x^B}|\psi^T_x\rangle|
\ge (1-2\epsilon)^2,\end{equation}
and so $\||\psi^T_x\rangle-|\psi^T_{x^B}\rangle\|\ge 1-2\epsilon$.

Hence the final states of the algorithm are far in apart in $\ell_2$ distance 
on inputs $x$ and $x^B$. We also know that the initial states $|\psi^0_x\rangle$
and $|\psi^0_{x^B}\rangle$ are identical. We keep track of how much
this distance $d_t:=\||\psi^t_x\rangle-|\psi^t_{x^B}\rangle\|$
changes for $t\in\{0,1,\dots, T\}$. For each $t$, we have
\begin{equation}
d_{t+1}=\||\psi^{t+1}_x\rangle-|\psi^{t+1}_{x^B}\rangle\|
=\|U_{t+1}O_x|\psi^t_x\rangle-U_{t+1}O_{x^B}|\psi^t_{x^B}\rangle\|
=\|O_x|\psi^t_x\rangle-O_{x^B}|\psi^t_{x^B}\rangle\|,
\end{equation}
since $U_{t+1}$ is a unitary and preserves norms. This equals
\begin{equation}\|O_{x^B}|\psi^t_x\rangle-O_{x^B}|\psi^t_{x^B}\rangle+(O_x-O_{x^B})|\psi^t_{x}\rangle\|
\le\|O_{x^B}|\psi^t_{x}\rangle-O_{x^B}|\psi^t_{x^B}\rangle\|+\|(O_x-O_{x^B})|\psi^t_{x}\rangle\|\end{equation}
\begin{equation}=d_t+\|(O_x-O_{x^B})|\psi^t_{x}\rangle\|.\end{equation}
Next, decompose $|\psi^t_{x}\rangle$ by the value of the query register.
On basis vectors when the query register is not in $B$, the unitaries
$O_x$ and $O_{x^B}$ behave the same; such vectors therefore
get mapped to zero. If $|\psi_t^{x,B}\rangle$ denotes the component
of $|\psi^t_{x}\rangle$ whose query register is in $B$, we get
\begin{equation}\|(O_x-O_{x^B})|\psi^t_{x}\rangle\|
=\|(O_x-O_{x^B})|\psi_t^{x,B}\rangle\|
\le \|O_x|\psi_t^{x,B}\rangle\|+\|O_{x^B}|\psi_t^{x,B}\rangle\|
=2\||\psi_t^{x,B}\rangle\|\end{equation}
\begin{equation}=2\cdot\sqrt{\sum_{i\in B} m_i^{t+1}},\end{equation}
where the last equality follows from the definition of 
$m_i^{t+1}$, which is defined to be the probability that the algorithm
is found to be querying position $i$ right before making query $t+1$.
The increase from $d_t$ to $d_{t+1}$ is therefore upper bounded
by $2\sqrt{\sum_{i\in B} m_i^{t+1}}$, so we have
\begin{equation}2\sum_{t=1}^T\sqrt{\sum_{i\in B} m_i^t}
\ge d_T-d_0\ge 1-2\epsilon.\end{equation}

Using the Cauchy–-Schwarz inequality on the outer sum gives
\begin{equation}2\sqrt{T}\sqrt{\sum_{t=1}^T\sum_{i\in B} m_i^t}\ge 1-2\epsilon,\end{equation}
or
\begin{equation}\sum_{t=1}^T\sum_{i\in B} m_i^t\ge\frac{(1-2\epsilon)^2}{4T} = \Omega\left(\frac{1}{T}\right),
\end{equation}
when $\epsilon$ is a constant.\footnote{This can be slightly improved
to $(1-2\sqrt{\epsilon(1-\epsilon)})/2T$ by not using the approximation in \eq{epsupper}.}
\end{proof}

\bibliographystyle{alphaurl}
\phantomsection\addcontentsline{toc}{section}{References} 
\renewcommand{\UrlFont}{\ttfamily\small}
\newcommand{\eprint}[1]{\small \upshape \tt \href{http://arxiv.org/abs/#1}{#1}}
\let\oldpath\path
\renewcommand{\path}[1]{\small\oldpath{#1}}
\bibliography{sabotage}

\end{document}